\newtheorem{theorem}{Theorem}
\newtheorem{lemma}[theorem]{Lemma}
\newtheorem{proposition}[theorem]{Proposition}
\newtheorem{corollary}[theorem]{Corollary}
\theoremstyle{definition}
\newtheorem{definition}[theorem]{Definition}
\newtheorem{example}[theorem]{Example}
\newtheorem{remark}[theorem]{Remark}
\def\R{\mathbb{R}}
\def\v{\mathbf{v}}
\def\R{\mathbb{R}}
\def\1{\mathbf{1}}
\def\cL{\bbL}
\def\op{\mathrm{op}}
\title{Sampling Transferable Graph Neural Networks with Limited Graph Information
}
\author{Haoyu~Wang, Renyuan~Ma, Gonzalo~Mateos, and~Luana~Ruiz%
    \thanks{H. Wang and L. Ruiz are with the Department of Applied Mathematics and Statistics, Johns Hopkins University, Baltimore, MD 21218 USA (e-mail: hwang320@jh.edu; lrubini1@jh.edu).}%
    \thanks{R. Ma is with the Department of Statistics and Data Science, Yale University, New Haven, CT 06520 USA (e-mail: jack.ma.rm2545@yale.edu).}%
    \thanks{G. Mateos is with the Department of Electrical and Computer Engineering, University of Rochester, Rochester, NY 14627 USA (e-mail: gmateosb@ece.rochester.edu).}%
}
\begin{document}
\maketitle

\begin{abstract}
Graph neural networks (GNNs) achieve strong performance on a wide range of graph learning tasks, but training on large-scale networks remains computationally challenging. Transferability results show that GNNs with fixed weights can generalize from smaller graphs to larger ones drawn from the same family, motivating the use of sampled subgraphs to boost training efficiency. Yet most existing sampling strategies rely on reliable access to the target (inference time) graph structure, which in practice may be noisy, incomplete, or unavailable prior to training. In lieu of precise connectivity information we study feature-driven subgraph sampling for transferable GNNs, with the goal of preserving spectral properties of graph operators that control the expressivity of the 
GNN. We adopt a fresh alignment-based perspective that links node feature statistics to graph spectral structure and develop two complementary notions of feature–graph alignment. For \textit{coarse alignment}, we formalize feature homophily through a Laplacian-based measure that quantifies the alignment of feature principal components with graph eigenvectors, and establish a lower bound on the trace of the graph Laplacian in terms of feature statistics. This motivates a simple, non-sequential sampling algorithm that operates directly on the feature matrix and preserves a trace-based proxy for operator rank, thus not compromising GNN expressivity.
For fine alignment, we assume a stationary model where the feature covariance and Laplacian share an eigenbasis. We establish that, with high probability, diagonal covariance entries reflect node-degree ordering under monotone filters. We empirically validate that the filter’s monotonicity dictates the relationship between feature variance and spectral energy. On real-world benchmarks, we find that a single retention rule typically dominates across all sampling budgets; selecting the rule that maximizes the Laplacian trace consistently yields GNNs with superior transferability and reduced generalization gaps.


\end{abstract}

\begin{IEEEkeywords}
Graph signal processing, graph sampling, graph neural networks, transferability, homophily, stochastic block model
\end{IEEEkeywords}

\section{Introduction}
\label{sec:intro}

Graph neural networks (GNNs) are deep learning models 
tailored to network data, which have shown great empirical performance in several graph machine learning tasks; see e.g., \cite{gori2005new,kipf17-classifgcnn,defferrard17-cnngraphs,gama18-gnnarchit}. Noteworthy are graph signal processing (GSP) problems--such as recommender systems on product similarity networks \cite{ruiz2020gnns}, or, attribution of research papers to scientific domains \cite{hamilton2017inductive}--in which GNNs' invariance and stability properties \cite{ruiz19-inv,gama19-stability} play a key role. 

Yet, in practice most successful applications of GNNs are limited to graphs of moderate size. The sheer size of many modern networks, typically in the order of several millions, frequently makes these models impractical to train. Favorable results have been seen by leveraging the GNN's transferability property \cite{ruiz20-transf,levie2019transferability}, which states that a GNN with fixed weights produces similar outputs on large enough graphs belonging to the same ``family'', e.g., the same random graph model. This property justifies affordably training the GNN on a moderate-sized graph, and transferring it for inference on the large graph.

The transferability of GNNs is closely related to their convolutional parametrization (Section \ref{ssec:conv_gnn}), and is a consequence of the fact that graph convolutions converge on sequences of graphs converging to a common graph limit \cite{ruiz2020graphonsp}. Under certain assumptions on the type of limit, and on how the graphs converge to (or are sampled from) them, it is possible to obtain non-asymptotic error bounds inversely proportional to the sizes of the graphs. Such bounds are then used to inform the minimum graph size on which to train a GNN to meet a maximum transference error. 
Once this is determined, the training graph is obtained by sampling a subgraph with the prescribed size. However, most existing sampling approaches implicitly assume reliable access to the graph structure. In many practical settings, this assumption does not hold, e.g., the observed graph may be noisy, partially missing, or temporally outdated. Accordingly, graph-based sampling procedures may be infeasible, motivating the need for sampling strategies that do not depend critically on the observed graph. In this challenging context, we study sampling for transferable GNNs under limited graph information, with the goal of preserving spectral properties of graph operators that control the expressivity of the
GNN; see also Section \ref{subsec:prob_statement}.\vspace{2pt}

\noindent\textbf{Related work and our distinct focus.} Node features provide a natural alternative source of information. This perspective is well established in GSP, wherein node features can be modeled as graph signals and used to infer structural properties. Examples include smoothness-based learning frameworks that use feature variation to characterize spectral content \cite{shuman2013gsp,belkin2006manifold,zhu2003gaussian}. Feature covariance matrices have been used for topology identification under graph stationarity or low-rank assumptions \cite{segarra2017topology,dong2016laplacian,kalofolias2016learn}, for global network property estimation from node signals (e.g., centrality~\cite{roddenberry2021blind,he2022detecting}), and more recently for the design of covariance neural networks (VNNs)--GNNs supported on feature covariance graphs~\cite{VNN_NeurIPS_22,VNN_JSTSP,VNN_SPMAG_2}. In lieu of precise network connectivity information, we also leverage feature correlations as proxies of structure, but our novel angle is to guide graph sampling for transferable GNNs.

When the graph is available, sampling strategies are typically designed to preserve spectral properties of graph operators that control the expressivity of graph filters, such as eigenvectors up to a given bandwidth, operator rank, or trace-based proxies. These objectives are central to classical sampling results in GSP and optimal experimental design \cite{anis2016spectralproxies,tsitsvero2016uncertainty,pukelsheim2006optimal}, and  directly relate to the stability and transferability of GNNs \cite{gama2020stability}, whose convolutional layers are parametrized as graph filters. Unlike~\cite{anis2016spectralproxies,tsitsvero2016uncertainty} that deal with graph signal reconstruction, here we study graph sampling as a vehicle for GNN transferability and efficient training. The challenge in the feature-driven setting is to recover analogous sampling criteria without direct access to the graph operator itself.\vspace{2pt}

\noindent\textbf{Proposed approach, contributions, and paper outline.} Our novel idea is to adopt an alignment-based perspective that links feature statistics to spectral properties of the underlying graph operator. The guiding principle is that, when node features are sufficiently aligned with the graph structure, spectral proxies relevant to graph filtering can be recovered directly from feature information.
We study two complementary notions of feature–graph alignment dubbed \textit{coarse} and \textit{fine}. 

At the \textit{coarse} level, in Section \ref{sec:coarse_alignment} we formalize feature homophily through a Laplacian-based measure that quantifies how feature principal components align with low- or high-frequency eigenvectors of the graph Laplacian. Small values of this measure indicate that features concentrate their energy on low-frequency eigenvectors, which encode global graph structure such as communities. Under this condition, we establish a lower bound on the trace of the graph Laplacian in terms of feature statistics, showing that minimizing the trace of the feature correlation matrix leads to improved preservation of the trace proxy for operator rank. This result motivates a simple, non-sequential sampling algorithm that operates directly on the feature matrix, incurs lower complexity than rank-preserving spectral methods, and scales to large graphs.

At the \textit{fine} level (Section \ref{sec:fine_alignment}), we assume a stationary graph signal model in which node features are generated by convolutional filtering of latent excitation signals. Under stationarity, the feature covariance matrix and the graph Laplacian commute, implying a shared eigenbasis. We also establish that diagonal entries of the feature covariance matrix reflect node degree ordering with high probability, up to finite-sample deviations. This provides a theoretical justification for using feature-only statistics to recover fine-grained structural information, and further supports feature-driven sampling when stronger alignment assumptions hold. To validate these theoretical insights, Section \ref{sec:sbm} presents a rigorous case study on the Stochastic Block Model (SBM), which provides quantifiable guarantees on degree-ordering recovery and trace preservation under monotone filters. We further discuss the implications of the underlying filter's monotonicity in Section \ref{sec:discussion}, exploring how it dictates both the optimal node sampling strategy and the nature of feature correlations on the graph. 

Relative to the conference precursor \cite{li2025graphsampling}, which primarily introduced the coarse alignment framework and the trace-minimizing sampling heuristic, this paper presents a comprehensive and unified treatment of feature-driven graph sampling. Noteworthy novel contributions in this full journal paper include: (i) the introduction of the fine alignment framework grounded in the stationary graph signal model; (ii) theoretical guarantees linking node-wise feature variance to degree ordering under monotone filters; (iii) a rigorous theoretical case study on the Stochastic Block Model (SBM) validating these spectral approximation insights; and (iv) a markedly expanded experimental evaluation encompassing both synthetic environments and a wider array of real-world datasets to assess the downstream transferability of GNNs.

\section{Preliminary Background}

An unweighted graph $G=(V,E)$ consists of a set of vertices or nodes $V$ and a set of edges $E \subseteq V \times V$. Generally, graphs can be categorized as being either directed or undirected based on their edge set $E$. A graph is undirected if and only if for any two nodes $u,v \in V$, $(u,v) \in E$ also implies $(v,u) \in E$ (and both correspond to the same undirected edge). In this paper, we restrict attention to undirected graphs.

Let $n=|V|$ be the number of nodes and $m = |E|$ be the number of edges. The $n \times n$ adjacency matrix $\mathbf{A}$ has entries
$$A_{ij} =
\begin{cases}
 1 & \text{if } (i,j) \in E \\
    0 & \text{otherwise.}
\end{cases}$$
The Laplacian matrix or graph combinatorial Laplacian is defined as $\bbL_c = \mathbf{D} - \mathbf{A}$, where $\bbD=\mbox{diag}(\bbA\boldsymbol{1}_n)$ is the so-called degree matrix. From their definitions, and since $G$ is undirected, $\mathbf{A}$ and $\bbL_c$ are symmetric. More generally, we refer to a matrix $\bbS \in \reals^{n\times n}$ that encodes the graph's sparsity pattern (i.e., $S_{ij} \neq 0$ only if $(i,j) \in E$ or $i=j$) as a \textit{graph shift operator (GSO)}. The adjacency matrix $\bbA$ and the Laplacian $\bbL_c$ are common examples of GSOs \cite{segarra17-linear}.

Real-world graphs are associated with node data $\bbx \in \reals^n$ called graph signals, where $\bbx[i]$ corresponds to the signal value at node $i$. More generally, graph signals consist of multiple features, in which case they are represented as matrices $\bbX \in \mathbb{R}^{n \times d} $ with $d$ denoting the number of features per node.

The graph Laplacian plays an important role in GSP \cite{shuman13-mag,sandryhaila13-dspg}, as it allows defining the notion of total variation of a signal $\bbx$. Explicitly, the total variation of $\bbx$ is defined as $\textrm{TV}(\bbx)=\bbx^\top\bbL_c \bbx$ \cite{ortega2018graph}. Let $\bbL_c=\bbV\bbLam\bbV^\top$ be the Laplacian eigendecomposition, where $\bbLam$ is a diagonal matrix with eigenvalues ordered as $\lambda_1 \leq \ldots \leq \lambda_n$ and $\bbV$ is the corresponding orthogonal eigenvector matrix. For unit-norm signals, the maximum total variation is $\lambda_n$ and the minimum total variation is $\lambda_1=0$, which corresponds to the all-ones eigenvector $\bbv_1=\boldsymbol{1}_n$.  Therefore, the Laplacian eigenvalues can be interpreted as graph frequencies, and the eigenvectors as these frequencies' respective oscillation modes.

\subsection{Graph Convolutions and Graph Neural Networks}\label{ssec:conv_gnn}

A graph convolutional filter is the extension of a standard convolutional filter to graph signals, defined based on a graph shift operator $\bbS$. Given any choice of $\bbS$ and a graph signal $\bbx \in \reals^n$, the graph convolution is defined as \cite{sandryhaila13-dspg,shuman13-mag}
\begin{equation} \label{eqn:simple_conv}
\bby = h(\bbS) \bbx = \sum_{k=0}^{K-1} h_k \bbS^k \bbx
\end{equation}
where $h_0, \ldots, h_{K-1}$ are the filter coefficients or taps. 

For undirected graphs $G$, let $\bbS = \bbV \bbLam \bbV^\top$ denote the eigendecomposition of $S$ and $\hbx=\bbV^\top \bbx$ and $\hby = \bbV^\top \bby$ the projections of signals $\bbx$ and $\bby$ onto the eigenvector basis. We can write \cite{shuman13-mag}
\begin{equation} \label{eqn:spectral_filter}
    \hby = h(\bbLam) \hbx = \sum_{k=0}^{K-1} h_k \bbLam^k \hbx \text{.}
\end{equation}
This is a particular case of so-called spectral graph filters \cite{hammond2011wavelets,shuman13-mag}, which are more broadly defined in terms of general measurable functions $f$ as $\hby = f(\bbLam)\hbx$.

For multi-feature $\bbX \in \reals^{n \times d}$ and $\bbY \in \reals^{n\times f}$, we can generalize \eqref{eqn:simple_conv} to a convolutional filterbank \cite{gama18-gnnarchit}
\begin{equation}
\bbY = \sum_{k=0}^{K-1} \bbS^k \bbX \bbH_k
\end{equation}
mapping features from $\reals^d$ to $\reals^f$, where $\bbH_k \in \reals^{d \times f}$, $0 \leq k \leq K-1$, group the corresponding filter coefficients.

Similarly to how image convolutions can be stacked with nonlinearities to form convolutional neural networks, GNNs are defined as sequences of layers each consisting of a graph convolution and a pointwise nonlinearity. Explicitly, we write the $\ell$th layer of a GNN as \cite{gama18-gnnarchit}
\begin{equation}
\bbX_{\ell} = \sigma \bigg( \sum_{k=0}^{K-1} \bbS^k \bbX_{\ell-1} \bbH_{\ell k}\bigg)
\end{equation}
where $\sigma: \reals \mapsto \reals$ is an entry-wise nonlinearity (e.g., the ReLU or sigmoid). At layer $\ell=1$, $\bbX_0$ is the input data $\bbX$, and the last layer output $\bbX_L$ is the output $\bbY$. For succinctness, in the following we  represent the whole $L$-layer GNN as a map $\bbY = \Phi(\bbX,G;\ccalH)$, with $\ccalH=\{\bbH_{\ell k}\}_{\ell,k}$.\vspace{2pt}

\noindent \textbf{Transferability of GNNs.} The mathematical property that allows training GNNs on small graph subsamples of the larger target graphs used for inference is their transferability. Explicitly, GNNs are transferable in the sense that when a GNN with fixed weights $\ccalH$ is transferred across two graphs in the same ``family'', the transference error is upper bounded by a term that decreases with the graph size \cite{levie2019transferability,ruiz2021transferability}. Typical transferability analyses show this by defining graph ``families'' as graphs coming from the same random graph model, or converging to a common graph limit. For example, considering families of graphs identified by the same graphon---which can be seen as both a generative model and a limit model for large graphs---, Ruiz et al. \cite{ruiz2021transferability} prove the following transferability theorem.

\begin{theorem}[GNN transferability, simplified \cite{ruiz2021transferability}]
Let $G_n$ and $G_m$ be two graphs with $n$ and $m$ nodes, respectively, sampled from a common graphon. Let $\bbX_n$ and $\bbX_m$ be the respective input graph signals. Let $\Phi(\cdot, \cdot; \ccalH)$ be a GNN with fixed coefficients $\ccalH$ (where $\ccalH$ is defined as the set of all filterbank weights $\{\bbH_{\ell k}\}_{\ell,k}$). Under mild assumptions on the graphon and the signals, the difference between the GNN outputs satisfies 
$\| {\Phi}(\bbX_n, G_n; \ccalH) - {\Phi}(\bbX_m, G_m; \ccalH) \| = \mathcal{O}\left(\sqrt{\frac{\log{n}}{n}} + \sqrt{\frac{\log{m}}{m}}\right)$ 
w.h.p..
\end{theorem}

In this paper, we will use the transferability property of GNNs, together with a novel graph sampling algorithm, to train GNNs on small graph subsamples and ensure they scale well to large graphs.\vspace{2pt}

\noindent \textbf{Expressivity of GNNs.} While GNNs achieve remarkable performance in many graph machine learning tasks, they have fundamental limitations associated with their expressive power \cite{xu2018how,chen2019equivalence}. In GSP problems specifically, the expressivity of a GNN is constrained by the expressivity of the graph convolution, which in turn is constrained by the rank of the shift $\bbS$ \cite{ruiz2024spectral}. This is demonstrated in the following proposition.

\begin{proposition}[Expressivity of graph convolution]
Let $G$ be an $n$-node symmetric graph with rank-$r$ GSO $\mathbf{S}$, $r < n$, and $\bbx \in \reals^n$ an arbitrary graph signal. Consider the graph convolution $\hat{\bby} = \sum_{k=0}^{K-1} h_k \mathbf{S}^k \bbx$. Let $\ccalY \subset \reals^n$ be the subspace of signals that can be expressed as $\bby = \hat{\bby}$ for some $h_0, \ldots, h_{K-1}$. Then, $\dim(\ccalY) \leq r + 1$.
\end{proposition}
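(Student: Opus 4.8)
The plan is to exploit the fact that the graph convolution $\hat{y} = \sum_{k=0}^{K-1} h_k \mathbf{S}^k x$ is a polynomial in $\mathbf{S}$ applied to $x$, and to isolate the single ``degree-zero'' term that escapes the range of $\mathbf{S}$. Concretely, I would split the sum as
\begin{equation}
    \hat{y} = h_0 x + \sum_{k=1}^{K-1} h_k \mathbf{S}^k x = h_0 x + \mathbf{S}\Big(\sum_{k=1}^{K-1} h_k \mathbf{S}^{k-1} x\Big),
\end{equation}
so that every term with $k \geq 1$ is collected into a vector lying in $\text{range}(\mathbf{S})$.

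First I would observe that, since the second summand belongs to $\text{range}(\mathbf{S})$ for any choice of $h_1, \ldots, h_{K-1}$, and the first summand belongs to $\text{span}\{x\}$, we have $\hat{y} \in \text{span}\{x\} + \text{range}(\mathbf{S})$ for every admissible coefficient vector. Taking the span over all coefficients, this yields the containment $\ccalY \subseteq \text{span}\{x\} + \text{range}(\mathbf{S})$.

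Next I would bound the dimension of the enclosing subspace by subadditivity of dimension under sums, $\dim(\text{span}\{x\} + \text{range}(\mathbf{S})) \leq \dim \text{span}\{x\} + \dim \text{range}(\mathbf{S})$. Since $\dim \text{range}(\mathbf{S})$ equals the rank $r$ by definition, and $\dim \text{span}\{x\} \leq 1$, the right-hand side is at most $r + 1$. Combining with the containment gives $\dim(\ccalY) \leq r+1$, as claimed.

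The argument is essentially a one-line linear-algebra observation, so I do not anticipate a serious obstacle; the only point requiring care is the bookkeeping of the ``$+1$'', and the fact that the bound may be loose (it tightens to $r$ whenever $x \in \text{range}(\mathbf{S})$). The naive alternative---diagonalizing the symmetric $\mathbf{S} = \mathbf{V}\boldsymbol{\Lambda}\mathbf{V}^T$ and counting how many distinct values the filter's frequency response $p(\lambda_i) = \sum_k h_k \lambda_i^k$ can take---forces one to track the $(n-r)$-fold zero eigenvalue separately, since there $p(0) = h_0$ is pinned to the constant tap. The decomposition above sidesteps this casework by absorbing exactly that constant-tap contribution into the single extra dimension $\text{span}\{x\}$, and it never uses symmetry beyond the fact that rank equals $\dim \text{range}(\mathbf{S})$.
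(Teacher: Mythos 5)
Your proof is correct, and it takes a genuinely different---and more elementary---route than the paper's. The paper writes $\ccalY = \mathrm{span}\{\mathbf{S}^k x : k \geq 0\}$ and then invokes the eigendecomposition $\mathbf{S} = \mathbf{E}\boldsymbol{\Sigma}\mathbf{E}^T$, available because $\mathbf{S}$ is symmetric, to argue that every power $\mathbf{S}^k = \mathbf{E}\boldsymbol{\Sigma}^k\mathbf{E}^T$ with $k \geq 1$ has the same rank-$r$ column space, so the vectors $\{\mathbf{S}^k x : k \geq 1\}$ span at most $r$ dimensions and $x$ contributes at most one more. Your factoring $\sum_{k\geq 1} h_k \mathbf{S}^k x = \mathbf{S}\bigl(\sum_{k \geq 1} h_k \mathbf{S}^{k-1} x\bigr) \in \mathrm{range}(\mathbf{S})$ reaches the same containment $\ccalY \subseteq \mathrm{span}\{x\} + \mathrm{range}(\mathbf{S})$ with no spectral machinery, and consequently---as you note---with no use of symmetry at all: it works for an arbitrary square $\mathbf{S}$ of rank $r$, since for general matrices one still has $\mathrm{range}(\mathbf{S}^k) \subseteq \mathrm{range}(\mathbf{S})$ even though equality of the column spaces across powers (which the paper's diagonalization delivers) can fail. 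Your version also replaces the paper's somewhat loosely worded step (``$\mathbf{S}^k$ shares \ldots\ the same coordinates regardless of scaling'') with the clean subadditivity bound $\dim(\mathrm{span}\{x\} + \mathrm{range}(\mathbf{S})) \leq 1 + r$, and it correctly treats $\ccalY$ as the image of the linear map $(h_0,\ldots,h_{K-1}) \mapsto \sum_k h_k \mathbf{S}^k x$, hence automatically a subspace of the enclosing sum. What the spectral route buys in exchange is the frequency-domain picture---the filter acting as a polynomial on each eigenvalue---which matches the GSP lens used elsewhere in the paper, but it is not needed for this proposition; your observation that the bound tightens to $r$ when $x \in \mathrm{range}(\mathbf{S})$ is a small bonus the paper's argument does not make explicit.
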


\begin{proof}
From the definition of graph convolution, we have
$$\ccalY = \text{span}(\{\bbS^k \bbx : k=0, \ldots, K\}).$$
Since $\bbS$ is symmetric, there exists some $\bbV, \mathbf{\Lambda} \in \mathbb{R}^{n \times n}$ satisfying $\bbV\bbV^\top = \bbI_n$ and $\mathbf{\Lambda}$ diagonal, such that $\bbS = \bbV\mathbf{\Lambda}\bbV^\top$. This implies the decomposition $\bbS^k = \bbV\mathbf{\Lambda}^k\bbV^\top$. As a result, $\bbS^k$ shares not only the same rank, but also the same spectral coordinates regardless of scaling. So for any given $\bbx\in \reals^n$, rank($\{\bbS^k\bbx : k=1, \ldots, K\}$) $\leq$ rank$(\bbS) = r$. Therefore, we have $\mathrm{dim}(\ccalY) \leq r+1$ with $\bbx$ as an additional coordinate candidate.
\end{proof}

In other words, the space of signals that can be represented with a graph convolution shrinks with the rank of the graph shift. Rank preservation is hence an important consideration when sampling subgraphs for training GNNs. To avoid the combinatorial complexity of evaluating rank directly, we will use the trace as a surrogate measure. This choice is analogous to objectives used in A-optimal experimental design \cite{pukelsheim2006optimal} and in convex relaxations of rank constraints for positive semidefinite matrices \cite{recht2010guaranteed,fazel2002matrix}.

\subsection{Problem Statement}\label{subsec:prob_statement}

When the graph structure is fully observed, preserving the Laplacian trace provides a natural surrogate for rank preservation. Since
\[
\mathrm{tr}(\bbL_c) = \sum_{i \in {V}} d_i,
\]
where $d_i$ is the degree of node $i$, this objective suggests degree-based sampling criteria, such as prioritizing nodes with large degree. However, in the absence of reliable structural information, preserving the Laplacian trace through purely graph-based heuristics is insufficient. The structural quantities required to evaluate or optimize trace may be only partially observed or unreliable in practice, and sampling decisions based solely on the observed graph may fail to preserve the spectral structure of interest.

We therefore use node features as an additional source of information to guide sampling designs. In many practical settings, feature statistics reflect how strongly individual nodes interact with the underlying graph structure. This can be exploited to guide sampling without requiring explicit spectral decompositions or detailed access to the graph structure. We exploit two complementary ways in which this interaction is encoded in the features, which we refer to as coarse and fine alignment.


Coarse alignment applies to feature-homophilic settings (Section~\ref{sec:coarse_alignment}), or equivalently to cases where node features have low Dirichlet energy. In this regime, features vary smoothly across the graph, and their energy is dominated by global structure rather than by local interactions. Consequently, nodes with large individual feature energy tend to contribute less to the relational structure captured by the Laplacian. Coarse alignment exploits this separation by prioritizing the removal of such nodes, preserving the structural information that is most relevant for graph convolution while avoiding costly spectral computations.

Fine alignment applies to settings in which the centered node features can be modeled as samples from a stationary graph signal. Under this assumption (see \cite{perraudin2017stationary,marques2017stationary} and Section~\ref{sec:fine_alignment}), the feature covariance admits a spectral characterization in terms of the graph Laplacian. When the corresponding filter is known, or when its frequency response is monotone, this structure enables us to infer the alignment between the diagonal of the Laplacian and the node-wise feature variances with high probability. Nodes can then be retained or removed according to this inferred alignment, yielding a more targeted sampling strategy.

\section{Coarse Alignment via Feature Heterophily}
\label{sec:coarse_alignment}

We begin by introducing the notion of \emph{feature heterophily}, which forms the basis of coarse alignment. To ensure the definition is comparable across graphs of different sizes and feature scales, we normalize the feature matrix $\bbX \in \reals^{n\times d}$ as
\begin{equation}\label{eqn:norm_features}
    \bbX \;=\; \frac{\bbX}{\max_{i,j}|X_{ij}|\sqrt{d}} \, .
\end{equation}

\begin{definition}[Feature heterophily]\label{def:feature_heterophily}
Let $G$ be an undirected graph with Laplacian $\bbL_c$ and normalized feature matrix $\bbX$ defined in~\eqref{eqn:norm_features}. The \emph{feature heterophily} of $G$ is
\begin{equation}\label{eqn:feature_heterophily}
    h_G \;=\; \frac{1}{n}\,\mathrm{tr}\!\left(\bbL_c\,\bbX\bbX^\top\right).
\end{equation}
\end{definition}

Since both $\bbL_c$ and $\bbX\bbX^\top$ are positive semidefinite, $h_G \ge 0$. Small values of $h_G$ indicate that the feature energy concentrates on low-frequency eigenspaces of $\bbL_c$, which encode global graph structure (e.g., communities). Conversely, larger values of $h_G$ indicate greater alignment with high-frequency components, which are typically associated with local or noisy variations. We will loosely call $G$ \emph{feature-homophilic} when $h_G$ is close to $0$.

The following result establishes a lower bound on $\mathrm{tr}(\bbL_c)$ in terms of $h_G$.

\begin{proposition}\label{prop:bound_trL}
For any undirected graph $G$ with normalized feature matrix $\bbX$ in~\eqref{eqn:norm_features},
\begin{equation}\label{eqn:bound_tr_L}
    \mathrm{tr}(\bbL_c) \;\ge\; \frac{n\, h_G}{\mathrm{tr}(\bbX\bbX^\top)} \, .
\end{equation}
\end{proposition}

\begin{proof}
Let $\bbC := \bbX\bbX^\top \succeq \mathbf{0}$. Since $\lambda_{\max}(\bbC)\le \mathrm{tr}(\bbC)$, we have
\[
\bbC \preceq \mathrm{tr}(\bbC)\,\mathbf{I}_n.
\]
Taking the trace after multiplying by $\bbL_c\succeq \mathbf{0}$ and using cyclic invariance yields
\[
\mathrm{tr}(\bbL_c\bbC) \le \mathrm{tr}\!\left(\bbL_c\,\mathrm{tr}(\bbC)\mathbf{I}_n\right)
= \mathrm{tr}(\bbC)\,\mathrm{tr}(\bbL_c).
\]
Recalling that $n h_G = \mathrm{tr}(\bbL_c\bbX\bbX^\top)=\mathrm{tr}(\bbL_c\bbC)$ gives
\[
\mathrm{tr}(\bbL_c)\ge \frac{n h_G}{\mathrm{tr}(\bbX\bbX^\top)},
\]
which is \eqref{eqn:bound_tr_L}.
\end{proof}

\subsection{Sampling Algorithm}

Despite its simplicity, Proposition~\ref{prop:bound_trL} provides insight for designing a sampling method on feature-homophilic graphs. For $h_G \approx 0$, the bound in~\eqref{eqn:bound_tr_L} becomes tighter as $\mathrm{tr}(\bbX\bbX^\top)$ decreases. This motivates removing nodes whose contributions to $\mathrm{tr}(\bbX\bbX^\top)=\sum_{i=1}^n \|X_{i,:}\|_2^2$ are largest, i.e., deleting nodes according to the diagonal entries of $\bbX\bbX^\top$ sorted in decreasing order. Equivalently, we retain nodes with the smallest diagonal scores; see Algorithm~\ref{alg:sampling}.

In the feature-homophilic regime, dominant directions of $\bbX\bbX^\top$ tend to align with low-frequency eigenspaces of $\bbL_c$, including the constant eigenvector associated with $\lambda_1(\bbL_c)=0$. Algorithm~\ref{alg:sampling} therefore tends to preserve Laplacian rank whenever possible. Recall that the multiplicity of the zero eigenvalue of $\bbL_c$ equals the number of connected components of $G$. Removing an isolated node (a single-node component) decreases both $n$ and the number of components by one, leaving $\mathrm{rank}(\bbL_c)=n-k$ unchanged.

Finally, Algorithm~\ref{alg:sampling} is related to \emph{leverage-score sampling} in randomized numerical linear algebra \cite{woodruff2014sketching,mahoney2011randnla}. The scores $s$ can be interpreted as leverage scores weighted by squared singular values, except that we remove (rather than retain) nodes with the highest scores. To see the connection, recall that leverage scores are the diagonal entries of the hat matrix $\bbH = \bbX(\bbX^\top \bbX)^{+} \bbX^\top$. Let the SVD of $\bbX$ be $\bbX = \bbU \bbSigma \bbV^\top$ with $\bbU = [\bbU_r \;\; \bbU_\perp]$ and $\bbSigma = \mathrm{diag}(\bbSigma_r, 0)$. Then $\bbH = \bbU_r \bbU_r^\top$, while $\bbX \bbX^\top = \bbU \bbSigma^2 \bbU^\top$, so $\mathrm{diag}(\bbX \bbX^\top)$ corresponds to leverage scores weighted by the squared singular values.

\begin{algorithm}[t]
\caption{Node Sampling for Feature-Homophilic Graphs}
\label{alg:sampling}
\begin{algorithmic}[1]
    \Require Graph $G=(V,E)$ with $|V|=n$, feature matrix $\bbX \in \reals^{n \times d}$, sampling ratio $\gamma \in [0,1]$
    \State $n_d \gets \lfloor (1-\gamma)n \rfloor$ \Comment{deletion budget}
    \State $\boldsymbol{s} \gets \mathrm{diag}(\bbX\bbX^\top)$ \Comment{node scores}
    \State $\text{idx} \gets \mathrm{argsort}(\boldsymbol{s})[0 : n-n_d]$ \Comment{retain smallest scores}
    \State $\tilde{V} \gets \text{idx}$
    \State $\tilde{E} \gets \{(u,v)\in E:\; u\in\tilde{V},\, v\in\tilde{V}\}$
    \State $\tilde{\bbX} \gets X_{\text{idx},:}$
    \State \Return $\tilde{G}(\tilde{V},\tilde{E})$, $\tilde{\bbX}$
\end{algorithmic}
\end{algorithm}

\subsection{Computational Complexity}

When $d \ll n$, as is common in large-scale graphs, Algorithm~\ref{alg:sampling} has lower complexity than rank-preserving spectral methods. The following proposition formalizes this claim.

\begin{proposition}\label{prop:complexity}
Let $G=(V,E)$ have $|V|=n$ and $|E|=m$, and let $\bbX\in\reals^{n\times d}$ denote the node feature matrix. For any sampling ratio $\gamma$, the overall complexity of Algorithm~\ref{alg:sampling} is $\mathcal{O}(dn + n\log n)$, dominated by computing and sorting $\boldsymbol{s}=\mathrm{diag}(\bbX\bbX^\top)$. If, in addition, one computes $h_G$ via sparse multiplications using $\bbL_c=\bbD-A$, then evaluating $h_G=\frac{1}{n}\mathrm{tr}(\bbL_c\bbX\bbX^\top)$ costs $\mathcal{O}(dm)$.
\end{proposition}

\begin{proof}
Computing the scores $\boldsymbol{s}=\mathrm{diag}(\bbX\bbX^\top)$ requires $\mathcal{O}(dn)$ operations since $s_i=\|X_{i,:}\|_2^2$. Sorting $\boldsymbol{s}$ costs $\mathcal{O}(n\log n)$. Constructing the induced subgraph by filtering edges costs $\mathcal{O}(m)$ if edges are scanned once and membership in $\tilde{V}$ is checked in $\mathcal{O}(1)$ time (e.g., via a Boolean mask).

To compute $h_G$, note that $\bbL_c\bbX=(\bbD-\bbA)\bbX$ can be formed as $\bbD \bbX$ (cost $\mathcal{O}(dn)$) minus $\bbA \bbX$ (cost $\mathcal{O}(dm)$ for sparse $\bbA$). The trace $\mathrm{tr}(\bbL_c\bbX\bbX^\top)=\mathrm{tr}(\bbX^\top\bbL_c\bbX)$ then costs an additional $\mathcal{O}(dn)$. Thus the total cost is $\mathcal{O}(dm)$.
\end{proof}

On sparse graphs ($m\ll n^2$) with moderate feature dimension $d$, the proposed method is substantially cheaper than direct maximization of $\mathrm{tr}(\bbL_c)$ under sequential deletions, which requires updating degrees and incurs at least $\mathcal{O}((1-\gamma)n^2)$ operations in dense representations. It is also significantly cheaper than spectral approaches such as \cite{chen2015discrete,anis2016efficient}, which rely on sequential greedy heuristics. In contrast, Algorithm~\ref{alg:sampling} admits parallel implementation, further improving scalability.
\section{Fine Alignment via Stationarity}
\label{sec:fine_alignment}
While coarse alignment suffices to preserve global spectral mass, it does not distinguish which nodes contribute most to that mass. Fine alignment addresses this limitation by exploiting the notion of graph signal stationarity \cite{perraudin2017stationary,marques2017stationary} to recover node-wise structural ordering from feature statistics.

From now on, we assume that the feature dimension $d$ and the number of nodes $n$ satisfy the scaling $d^{1/\alpha} \leq n \leq d^\alpha$ for some fixed $\alpha > 0$. This assumption is standard in high-dimensional statistics and encompasses the linear regime ($n \propto d$) commonly studied in random matrix theory \cite{bai2010spectral,vershynin2018hdp}, as well as more general polynomial scalings. It rules out only extreme settings in which one dimension grows exponentially faster than the other.

Let $\bbL := \bbL_c/n$ denote the rescaled Laplacian of an $n$-node undirected graph. Under the stationarity assumption, node features are generated through convolutional filtering of latent excitation signals on the graph, inducing a tight relationship between the feature covariance and the spectrum of $\bbL$.

\begin{definition}[Stationary graph signal model \cite{perraudin2017stationary,marques2017stationary}]\label{def:stationary_model}
Let $\mathbf{x}_1, \ldots, \mathbf{x}_n \sim \mathcal{N}(0, \bbI_d)$ be independent and identically distributed random vectors in $\mathbb{R}^d$, grouped in the matrix $\bbX_0$ as
\begin{equation}
\mathbf{X}_0 =\frac{1}{\sqrt{d}}\begin{bmatrix} \mathbf{x}_1^\top \\
\vdots \\
\mathbf{x}_n^\top
\end{bmatrix}
\in \mathbb{R}^{n \times d}.
\end{equation}
I.e., each row of $\mathbf{X}_0$ corresponds to a feature vector associated with a graph node. The normalization by $\sqrt{d}$ ensures  $\mathbb{E}[\mathbf{X}_0\mathbf{X}_0^\top] = \mathbf{I}$.
Fix a measurable function (filter) $h$. The function $h$ defines the stationary graph signal 
\begin{equation}
\mathbf{X} = h(\bbL) \mathbf{X}_0,
\label{eq:model1}
\end{equation}
where $h(\bbL)$ denotes application of $h$ to the rescaled Laplacian $\bbL$ in the spectral domain (cf. \eqref{eqn:spectral_filter}).
\end{definition}

This model corresponds to applying a spectral graph filter---such as the graph convolution in \eqref{eqn:simple_conv}---to an isotropic input signal. By construction, the rows of $\mathbf{X}_0$ are independent with $\mathbb{E}[\mathbf{X}_0 \mathbf{X}_0^\top] = \mathbf{I}$, so the resulting feature matrix $\mathbf{X}$ carries graph structural information through the graph filter $h(\bbL)$.

\subsection{Operator Alignment via Weak Commutativity}
\label{subsec:weak_commutativity}
It can be shown that in the stationary signal model, the feature covariance inherits the spectral structure of the Laplacian operator via commutativity. In the following analysis, we consider the general case where the graph—and thus the Laplacian $\mathbf{L}$—may be random from some random graph model. Since $h(\mathbf{L})$ is a function of a normal operator, $\mathbb{E}[\mathbf{X}\mathbf{X}^\top]$ and $\mathbf{L}$ admit a common eigendecomposition. This is formalized in the following proposition.

\begin{proposition}[Commutativity in expectation]\label{prop:expected_alignment}
Under the stationary graph signal model (Definition \ref{def:stationary_model}), assuming the initial signal $\mathbf{X}_0$ is sampled independently from the (possibly random) graph Laplacian $\mathbf{L}$, the expected covariance matrix and the Laplacian commute:
\begin{equation}
    \mathbb{E}[\mathbf{X}\mathbf{X}^\top \mathbf{L}] = \mathbb{E}[\mathbf{L} \mathbf{X}\mathbf{X}^\top].
\end{equation}
Or, equivalently, $\mathbb{E}[ [\mathbf{X}\mathbf{X}^\top, \mathbf{L}] ] = \mathbf{0}$, where $[\mathbf{A}, \mathbf{B}] = \mathbf{AB} - \mathbf{BA}$ denotes the commutator operator.
\end{proposition}

\begin{proof}
By the law of total expectation and the independence of $\mathbf{X}_0$ and $\mathbf{L}$, we have:
\begin{align*}
\mathbb{E}[\mathbf{X}\mathbf{X}^\top \mathbf{L}] &= \mathbb{E}\big[ \mathbb{E}[\mathbf{X}\mathbf{X}^\top \mid \mathbf{L}] \mathbf{L} \big] \\
&= \mathbb{E}[ h(\mathbf{L}) \mathbb{E}[\mathbf{X}_0 \mathbf{X}_0^\top] h(\mathbf{L}) \mathbf{L} ] \\
&= \mathbb{E}[ h^2(\mathbf{L}) \mathbf{L} ].
\end{align*}
Since any spectral function $h^2(\mathbf{L})$ commutes with $\mathbf{L}$ (i.e., $h^2(\mathbf{L})\mathbf{L} = \mathbf{L}h^2(\mathbf{L})$), it follows that:
\begin{align*}
\mathbb{E}[ h^2(\mathbf{L}) \mathbf{L} ] &= \mathbb{E}[ \mathbf{L} h^2(\mathbf{L}) ] \\
&= \mathbb{E}\big[ \mathbf{L} \mathbb{E}[\mathbf{X}\mathbf{X}^\top \mid \mathbf{L}] \big] \\
&= \mathbb{E}[\mathbf{L} \mathbf{X}\mathbf{X}^\top].
\end{align*}
This concludes the proof.
\end{proof}
This expectation-level commutativity suggests a shared spectral structure, but does not yet imply that a single realization of $\bbX\bbX^\top$ reflects the graph Laplacian $\bbL$. Bridging this gap requires finite-sample control.
I.e., what we need is for the covariance matrix $\bbX\bbX^\top$ and the Laplacian $\bbL$ to commute w.h.p.

To prove that this indeed holds, we first require the following lemma 
showing that the entries of the $\bbX\bbX^\top$ are close to the entries of $h^2(\bbL)$. This lemma leverages the fact that $\bbX_0$ is isotropic together with Bernstein's inequality \cite{vershynin2018hdp,wainwright2019high}.

\begin{lemma}\label{lem:weak approximation}
Under model \eqref{eq:model1}, there exist constants $c,C_h>0$ such that, for any deterministic unit vectors $\mathbf{u},\mathbf{v}\in\mathbb{R}^n$, we have
\[
 \mathbb{P}\left(|\mathbf{v}^\top(\mathbf{X}\mathbf{X}^\top-h^2(\mathbf{L}))\mathbf{u}|\geq\frac{C_h\log(d)}{\sqrt{d}}\right)\leq 2d^{-c\log(d)}.
\]
\end{lemma}

\begin{proof}
By the definition of $\mathbf{X}$, we have:
\begin{align*}
\mathbf{X}\mathbf{X}^\top &= h(\mathbf{L})\mathbf{X}_0\mathbf{X}_0^\top h(\mathbf{L}) \\
 &= h^2(\mathbf{L}) + h(\mathbf{L})(\mathbf{X}_0\mathbf{X}_0^\top - \mathbf{I})h(\mathbf{L}).
\end{align*}
For any deterministic unit vectors $\mathbf{u}, \mathbf{v} \in \mathbb{R}^n$, we define the term $\mathcal{S}$ as:
\begin{align*}
 \mathbf{v}^\top(\mathbf{X}\mathbf{X}^\top - h^2(\mathbf{L}))\mathbf{u} &= \mathbf{v}^\top h(\mathbf{L})(\mathbf{X}_0\mathbf{X}_0^\top - \mathbf{I})h(\mathbf{L})\mathbf{u} =: \mathcal{S}.
\end{align*}
Let $\mathbf{v}' = h(\mathbf{L})\mathbf{v}$ and $\mathbf{u}' = h(\mathbf{L})\mathbf{u}$. We can rewrite $\mathcal{S}$ as:
\begin{align*}
\mathcal{S} &= \mathbf{v}'^\top(\mathbf{X}_0\mathbf{X}_0^\top - \mathbf{I})\mathbf{u}' \\
&= \frac{1}{d} \sum_{\alpha=1}^d \left[ (\sqrt{d}\mathbf{e}_\alpha^\top \mathbf{X}_0^\top \mathbf{v}')(\sqrt{d}\mathbf{e}_\alpha^\top \mathbf{X}_0^\top \mathbf{u}') - \mathbf{v}'^\top \mathbf{u}' \right],
\end{align*}
where $\{\mathbf{e}_\alpha\}_{\alpha=1}^d$ denotes the canonical basis vectors in $\mathbb{R}^d$. 

Since $\mathbf{X}_0$ has i.i.d. $N(0, 1/d)$ entries, the terms $(\sqrt{d}\mathbf{e}_\alpha^\top \mathbf{X}_0^\top \mathbf{v}')$ and $(\sqrt{d}\mathbf{e}_\alpha^\top \mathbf{X}_0^\top \mathbf{u}')$ are independent Gaussian random variables. Their product is sub-exponential with a sub-exponential norm bounded by $\|\mathbf{v}'\|_2 \|\mathbf{u}'\|_2 \leq C_h$, where $C_h$ depends only on the spectral norm of $h(\mathbf{L})$. 

Applying Bernstein's inequality for sub-exponential random variables \cite[Theorem 2.8.1]{vershynin2018high}, we obtain for any $t > 0$:
\[
 \mathbb{P}(|\mathcal{S}| \geq t/d) \leq 2 \exp\left( -c \min\left\{ \frac{t^2}{C_h^2 d}, \frac{t}{C_h} \right\} \right).
\]
Setting $t = C_h \log(d) \sqrt{d}$ yields the desired bound.
\end{proof}

Using Lemma~\ref{lem:weak approximation}, we prove commutativity at the sample level by showing that quadratic forms of the empirical covariance $\bbX\bbX^\top$ concentrate uniformly around those of $h^2(\bbL)$. This provides the finite-sample control needed to move from expectation to realization.

\begin{proposition}[Weak commutativity]\label{prop:weak_commutativity}
Under model \eqref{eq:model1}, the Laplacian matrix $\mathbf{L}$ and the covariance matrix $\mathbf{X}\mathbf{X}^\top$ commute weakly: there exist constants $c,C_h>0$ such that, for any deterministic unit vectors $\mathbf{u},\mathbf{v}\in\mathbb{R}^n$, we have
\begin{equation}
\begin{aligned}
    \mathbb{P}\Bigg( \Big| \mathbf{v}^\top(\mathbf{L} \mathbf{X}\mathbf{X}^\top &- \mathbf{X}\mathbf{X}^\top\mathbf{L})\mathbf{u} \Big| \\
    &\geq \frac{C_h\log(d)}{\sqrt{d}} \Bigg) \leq 4d^{-c\log(d)}.
\end{aligned}
\end{equation}
\end{proposition}

\begin{proof}
Since $h^2(\mathbf{L})$ commutes with $\mathbf{L}$, we have for any deterministic unit vectors $\mathbf{u}, \mathbf{v} \in \mathbb{R}^n$:
\begin{align*}
 \mathbf{v}^\top(\mathbf{L} \mathbf{X}\mathbf{X}^\top - \mathbf{X}\mathbf{X}^\top\mathbf{L})\mathbf{u} &= \mathbf{v}^\top \mathbf{L}(\mathbf{X}\mathbf{X}^\top - h^2(\mathbf{L}))\mathbf{u} \\
 &\quad - \mathbf{v}^\top (\mathbf{X}\mathbf{X}^\top - h^2(\mathbf{L}))\mathbf{L}\mathbf{u}.
\end{align*}
We define the two resulting terms as:
\begin{align*}
 \mathcal{S}_1 &:= \mathbf{v}^\top \mathbf{L} h(\mathbf{L})(\mathbf{X}_0\mathbf{X}_0^\top - \mathbf{I})h(\mathbf{L})\mathbf{u}, \\
 \mathcal{S}_2 &:= \mathbf{v}^\top h(\mathbf{L})(\mathbf{X}_0\mathbf{X}_0^\top - \mathbf{I})h(\mathbf{L})\mathbf{L}\mathbf{u}.
\end{align*}
The result follows from applying Lemma \ref{lem:weak approximation} to $\mathcal{S}_1$ and $\mathcal{S}_2$ individually. Specifically, by defining $\tilde{\mathbf{v}} = \mathbf{L}\mathbf{v}/\|\mathbf{L}\mathbf{v}\|_2$ and $\tilde{\mathbf{u}} = \mathbf{L}\mathbf{u}/\|\mathbf{L}\mathbf{u}\|_2$, both terms satisfy the concentration bound. A union bound over $\mathcal{S}_1$ and $\mathcal{S}_2$ yields the final probability.
\end{proof}


\subsection{Nodewise Alignment and Sampling}

The key implication of the commutativity result in Proposition \ref{prop:weak_commutativity} is that, beyond cospectrality, it also induces alignment at the level of the diagonal entries of $\bbX\bbX^\top$ and $\bbL$. This diagonal alignment reveals node-wise structural information encoded in $\bbX\bbX^\top$, which we exploit as a form of \emph{fine alignment} to guide sampling.

\begin{theorem}[Fine alignment]\label{thm:diag_concentration}
Under model \eqref{eq:model1}, for any $\epsilon > 0$, there exist constants $c,C_h>0$ such that
\begin{align*}
 &\mathbb{P} \left( \max_{1 \leq i \leq n} \left| \mathbf{e}_i^\top \mathbf{X}\mathbf{X}^\top \mathbf{e}_i - \mathbf{e}_i^\top h^2(\bbL) \mathbf{e}_i \right| \geq \frac{C_h \log}{\sqrt{d}} \right)\\
&\leq 2d^{-c\log d}.
\end{align*}
\end{theorem}
\begin{proof}
 Follows directly from Lemma \ref{lem:weak approximation}.
\end{proof}

In words, under the stationary graph signal model, the diagonal entries of the empirical feature covariance $\mathbf{X}\mathbf{X}^\top$ concentrate uniformly around the diagonal of $h^2(\bbL)$ w.h.p. As a consequence, node-wise feature variances provide accurate estimates of the corresponding diagonal entries of a function of the graph Laplacian, up to fluctuations that vanish as the feature dimension grows, establishing a direct link between observable feature statistics and structural quantities associated with the graph operator.

When the filter $h$ is known explicitly, Theorem~\ref{thm:diag_concentration} can be used to infer a degree-based sampling rule from features alone. Indeed, the theorem implies that the observable quantities $\{(\mathbf{X}\mathbf{X}^\top)_{ii}\}_{i=1}^n$ provide accurate estimates of the diagonal entries of $h^2(\bbL)$. For common choices of $h$, these diagonal entries admit a simple dependence on local connectivity and are monotone in the node degree; see Example~\ref{ex:identity_filter} and Example~\ref{ex:positive_poly_filter}. Consequently, sorting nodes by $(\mathbf{X}\mathbf{X}^\top)_{ii}$ recovers the degree ordering w.h.p., enabling a trace-preserving sampling strategy---i.e., retaining nodes with large inferred degree---without direct access to the graph structure.
The following examples show how this monotonicity may arise in practice, and how it can be used to inform feature-driven sampling.

\begin{example}[Identity filter]\label{ex:identity_filter}
Suppose that $h(x) = x$, so that $h^2(\bbL) = \bbL^2$. Then, for all $1 \leq i \leq n$:
\[
    [\bbL^2]_{ii} = \mathbf{e}_i^\top \bbL^2 \mathbf{e}_i = \frac{1}{n^2} \left( d_i^2 + \sum_{(i,j) \in E} 1 \right) = \frac{1}{n^2} (d_i^2 + d_i),
\]
where $d_i$ is the degree of node $i$. This expression shows that the diagonal of $\bbL^2$ is strictly monotonic with respect to node degree. Recall that $\mathrm{tr}(\bbL) = 2|E|/n$. Therefore, a sampling procedure keeping nodes with large contributions to $\mathrm{tr}(\mathbf{X}\mathbf{X}^\top)$ is equivalent to preserving nodes with high local connectivity, which naturally preserves the Laplacian trace.
\end{example}

\begin{example}[General positive polynomial filter]\label{ex:positive_poly_filter}
Consider a general filter $h(x) = \sum_{k=0}^{d} a_k x^k$ where all coefficients $a_k$ are positive. Then we have:
\[
    h^2(\bbL) = \sum_{m=0}^{2d} c_m \bbL^m, \quad \text{where } c_m = \sum_{k + \ell = m} a_k a_\ell
\]
with diagonal entry at index $i$ given by:
\[
    \left[ h^2(\bbL) \right]_{ii} = \sum_{m=0}^{2d} c_m \left[ \bbL^m \right]_{ii}.
\]
Since the coefficients $a_k$ are positive, the filter function $h$ and, consequently, the squared response $h^2$, are monotone increasing with the node degree. Similarly to Example~\ref{ex:identity_filter}, this suggests a feature-based sampling procedure wherein we keep nodes with largest diagonal entries in $\bbX\bbX^\top$.
Perhaps more interestingly, the term $[\bbL^m]_{ii}$ quantifies the weight of closed walks of length $m$ starting and ending at node $i$. Given that the coefficients $a_k$ are positive, the coefficients $c_m$ are also positive. As such, a larger diagonal value in $\mathbf{X}\mathbf{X}^\top$ implies that node $i$ is central to the graph's cyclic structure, aggregating positive weights from walks of all lengths up to $2d$.

\end{example}


These examples further illustrate that explicit knowledge of the filter $h$ is not required to recover a degree-based sampling rule. It is sufficient that $h$ 
be monotone,
so that ordering nodes by $[\mathbf{X}\mathbf{X}^\top]_{ii}$ induces the same (if monotone increasing), or the reverse (if monotone decreasing), ordering as their contributions to the Laplacian trace. In settings where this monotonicity is not known a priori, it could also be inferred from ``low-order approximations'' such as the feature heterophily from Definition~\ref{def:feature_heterophily}. In this sense, coarse alignment can be interpreted as a low-order approximation to the fine alignment framework developed here.

\begin{remark}
Under fine alignment, the complexity of sampling is the same as that of Algorithm~\ref{alg:sampling}, minus the heterophily computation ~\ref{prop:complexity}. Unlike spectral sampling methods that rely on eigendecompositions and incur cubic complexity in the number of nodes (\(\mathcal{O}(n^3)\)), the diagonal of \(\mathbf{X}\mathbf{X}^\top\) provides a computationally efficient quantity that can be evaluated in \(\mathcal{O}(nd^2)\). Thus, when appropriate alignment conditions such as the ones in~\ref{ex:identity_filter} and~\ref{ex:positive_poly_filter}  hold, sampling nodes according to these diagonal entries offers an efficient way to target regions of high spectral energy induced by an increasing filter \(h\), preserving the Laplacian trace as a proxy for rank while avoiding explicit spectral computations.
\end{remark}

\begin{figure*}[t]
\centering
\includegraphics[height=4cm,width=0.32\textwidth]{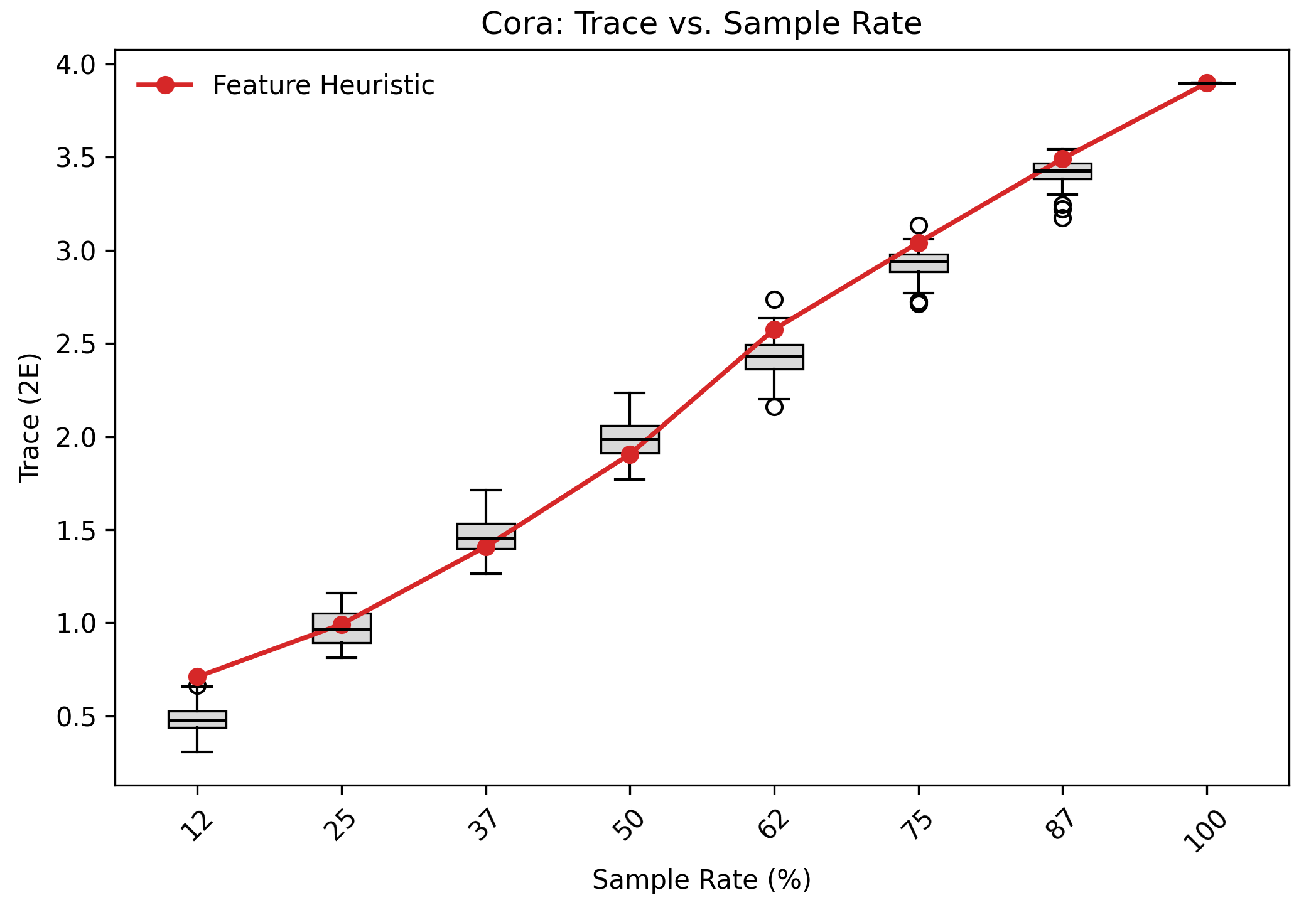}
\includegraphics[height=4cm,width=0.32\textwidth]{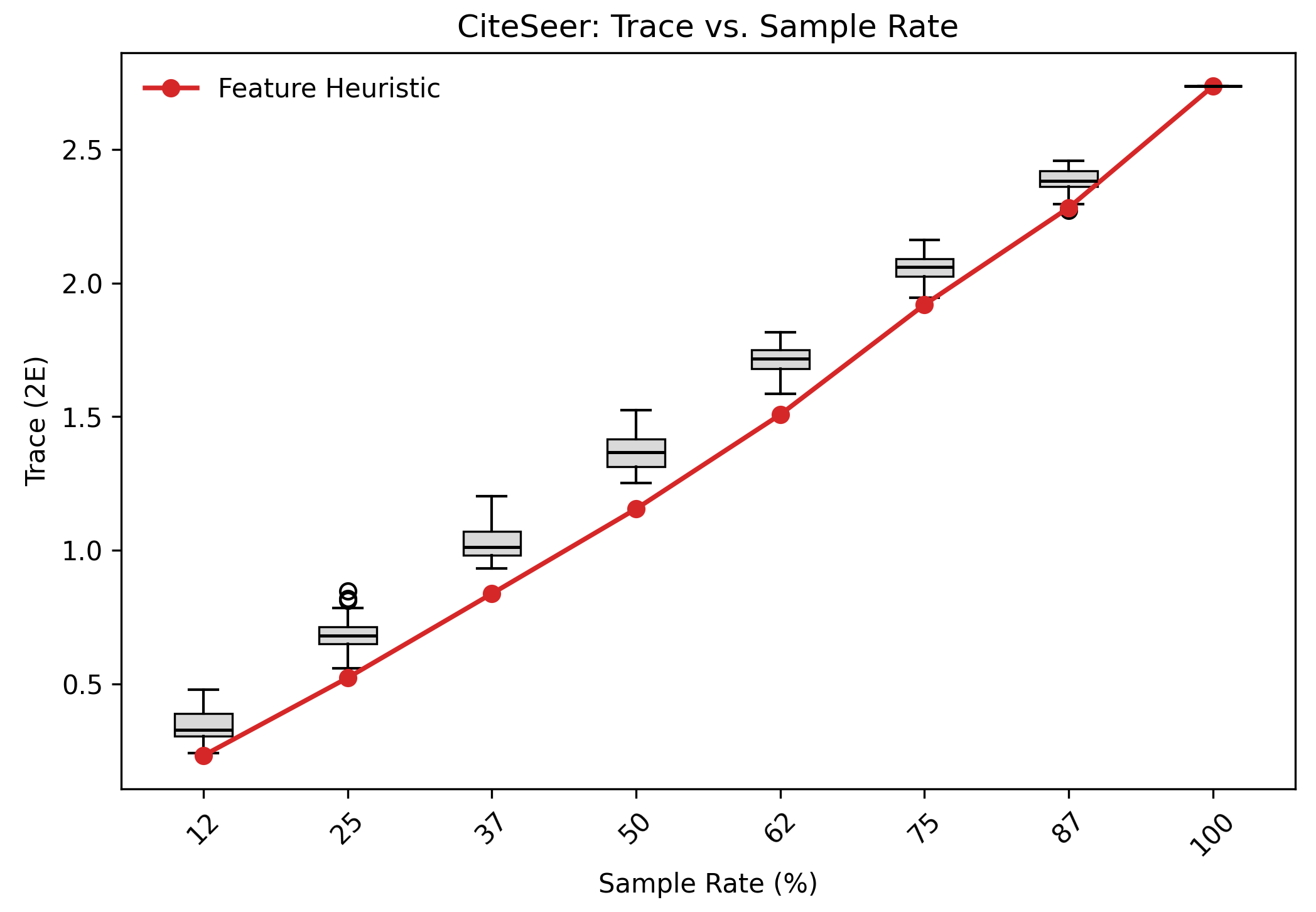}
\includegraphics[height=4cm,width=0.32\textwidth]{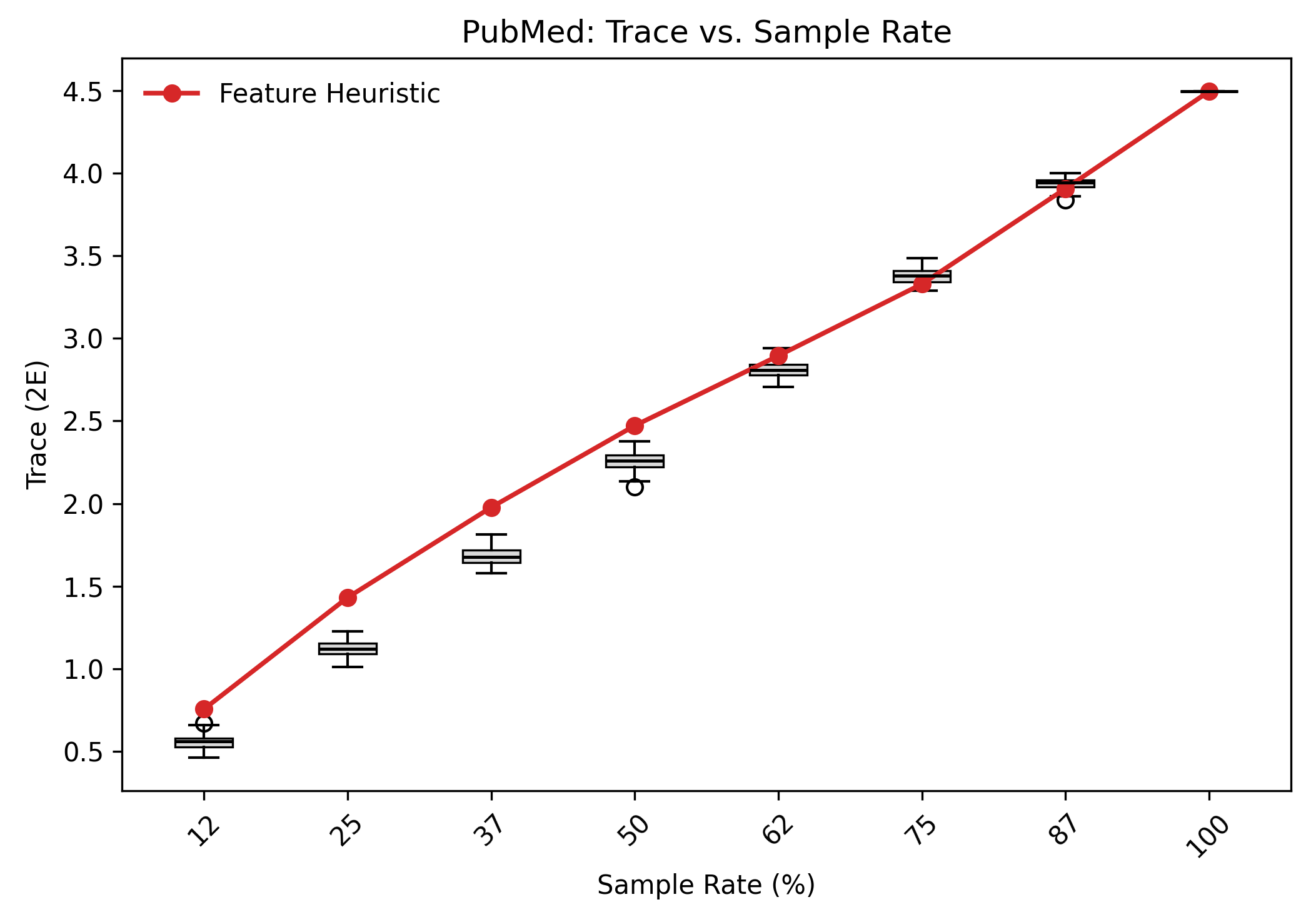}
\caption{Adjusted Laplacian trace versus graph subsampling rate. The adjusted trace is the subsampled graph Laplacian trace normalized by the number of sampled nodes. Boxplots indicate the median, first and third quartiles, minimum and maximum, and outliers of the trace, obtained from 50 rounds of random node subsampling; red dots are the trace of subgraphs generated using our sampling heuristic (Algorithm 1).}
\label{fig:trace}
\end{figure*}

\section{Case Study on Stochastic Block Model}\label{sec:sbm}

To further examine the implications of fine alignment in a more realistic setting, we consider a case study on stochastic block model (SBM) graphs. This random but controlled setting allows us to analyze the behavior of the proposed sampling strategy under two representative classes of graph filters: monotone increasing and monotone decreasing. 

In this section, we establish theoretical guarantees for the resulting sampling procedure by connecting three key ideas:
\begin{enumerate}
    \item \textbf{Spectral Approximation:} We show that the diagonal entries of $h^2(\mathbf{L})$ are locally determined by the normalized node degrees (Lemma \ref{lem:diagon of f squared}).
    \item \textbf{Degree-Ordering Preservation:} We prove that the ordering of the observable diagonal entries of $\mathbf{X}\mathbf{X}^\top$ correctly identifies high-degree or low-degree nodes depending on the filter's monotonicity (Theorem \ref{thm:large diagonal is large degree}).
    \item \textbf{Trace Optimality:} We conclude that feature-driven sampling based on these diagonal entries consistently outperforms uniform sampling in preserving the Laplacian trace (Corollary \ref{cor:trace_preservation}).
\end{enumerate}

An SBM with $r \geq 2$ communities is parameterized by a community size vector $\boldsymbol{n} = (n_1, \ldots, n_r) \in \mathbb{N}^r$ and a symmetric edge probability matrix $P \in [0,1]^{r \times r}$, where $P_{ij} = P_{ji}$. We denote by $\mathcal{G}(\boldsymbol{n}, P)$ the distribution over random graphs with $n = \sum_{i=1}^r n_i$ nodes, in which nodes are partitioned into $r$ disjoint communities of sizes $n_1, \ldots, n_r$, and edges between nodes in communities $i$ and $j$ are included independently with probability $P_{ij}$.
We further denote the expected adjacency and degree matrices $\overline{\mathbf{A}} = \mathbb{E}[A]$ and $\overline{\mathbf{D}} = \mathbb{E}[\mathbf{D}]$, respectively, and define the expected rescaled Laplacian as
\[
\overline{\bbL} := n^{-1} (\overline{\mathbf{D}} - \overline{A}).
\]

By Theorem~\ref{thm:diag_concentration}, we know that the diagonal entries of $\mathbf{X}\mathbf{X}^\top$ concentrate around those of $h^2(\bbL)$. So, our main goal is to analyze what the diagonal elements of $h^2(\bbL)$ represent. The following lemma shows that the $i$th diagonal entry of $h^2(\bbL)$ approximates the nodewise evaluation of $h^2$ at the normalized node degrees $d_i/n$.

\begin{figure*}[t]
\centering
\includegraphics[height=4cm,width=0.32\textwidth]{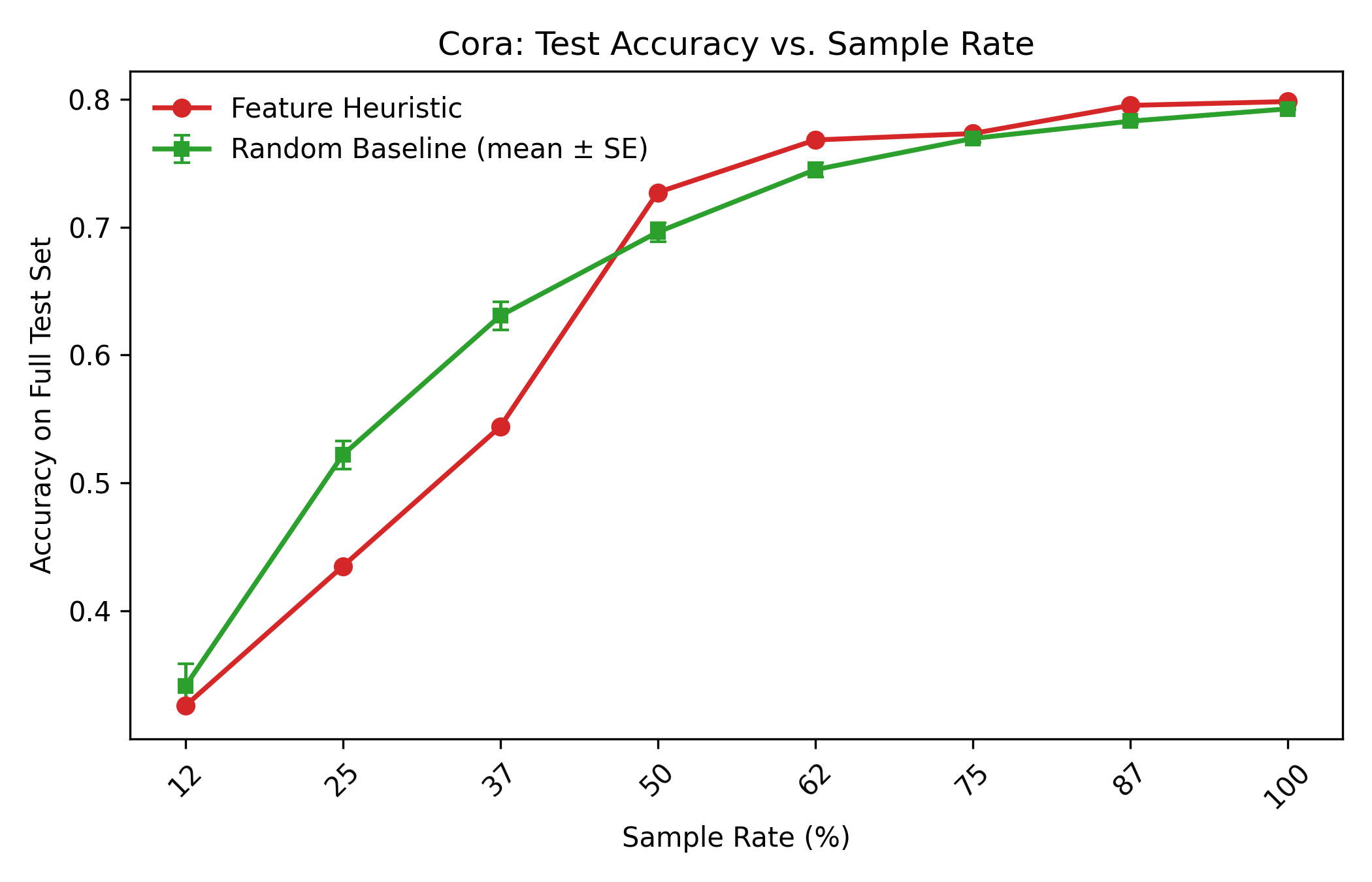}
\includegraphics[height=4cm,width=0.32\textwidth]{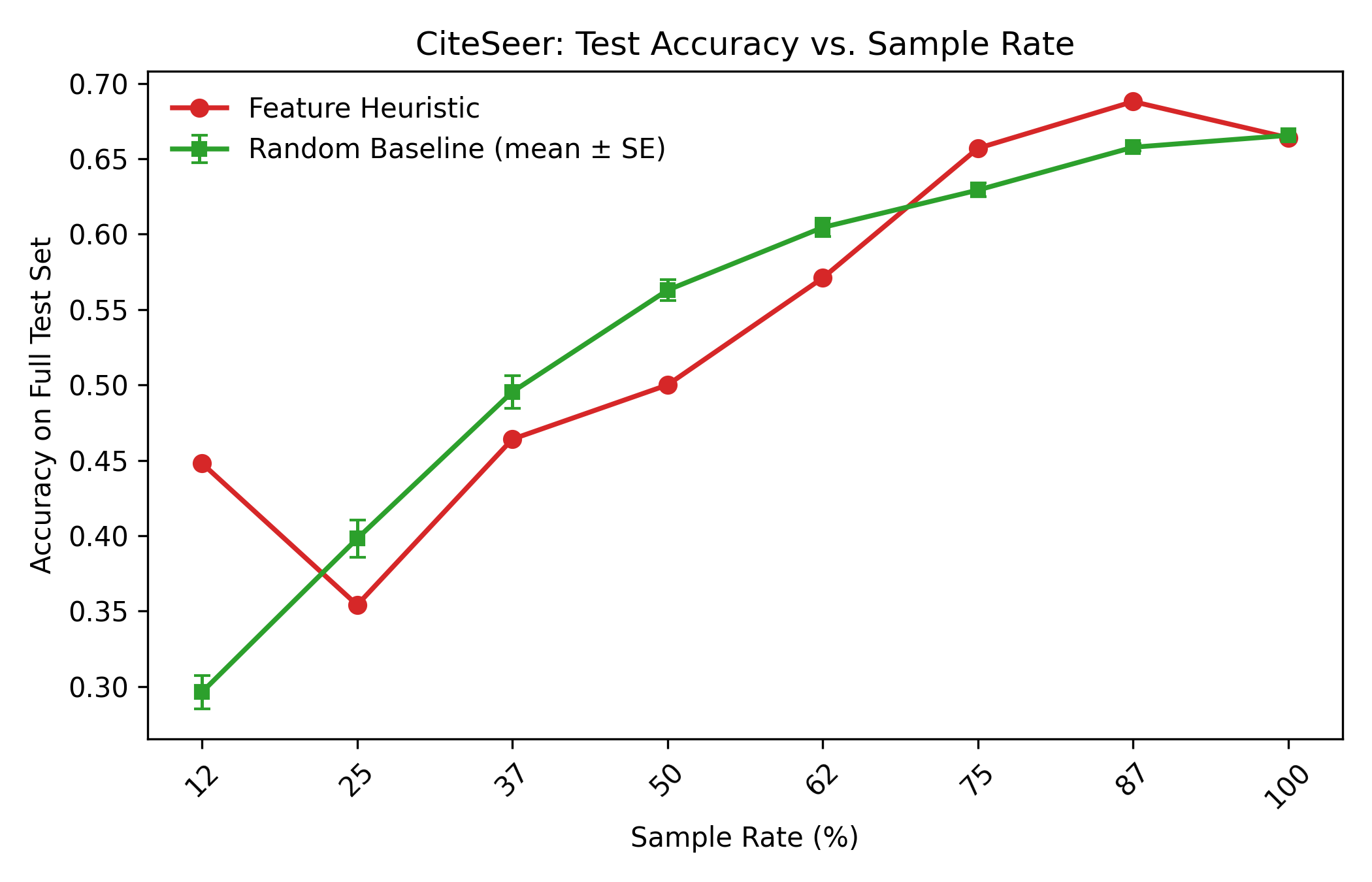}
\includegraphics[height=4cm,width=0.32\textwidth]{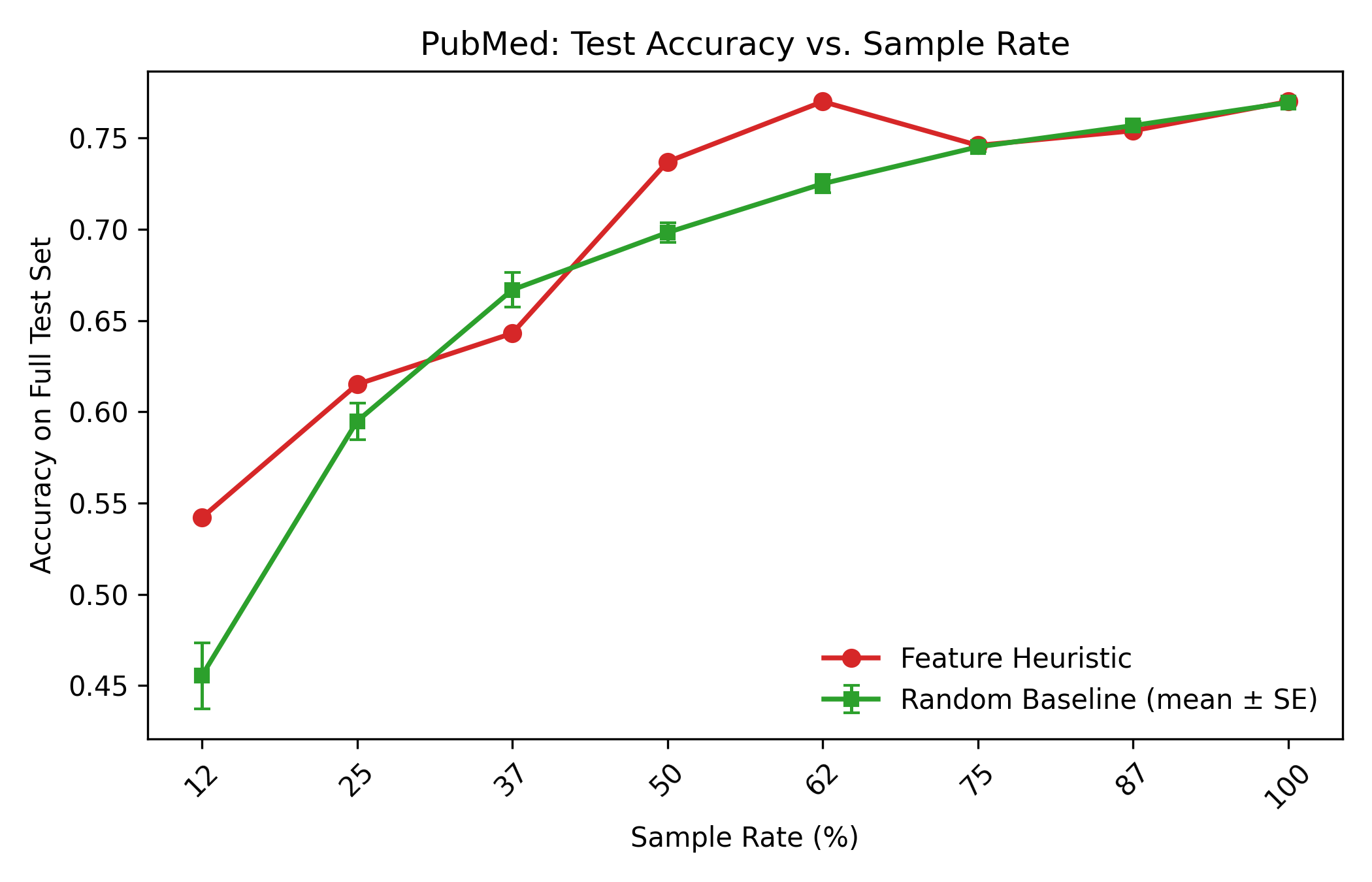}
\caption{Test accuracy achieved by GNN on full graph versus training graph subsampling rate. Error bars indicate the standard deviation of the accuracy realized by GNNs trained on 50 random node-induced subgraphs; red dots are the test accuracy of GNNs trained on subgraphs produced by our heuristic (Algorithm 1).}
\label{fig:GNN Acc}
\end{figure*}


\begin{lemma}[Diagonal approximation of $h^2(\bbL)$]
\label{lem:diagon of f squared}
Let $h: [0,2] \to \mathbb{R}$ be $\beta$-Hölder continuous with $0 < \beta \leq 1$. Suppose $G \sim \mathcal{G}(\boldsymbol{n}, \bbP)$ satisfies:
\[
\sum_{i > j} \overline{\bbA}_{ij}(1 - \overline{\bbA}_{ij}) \geq \delta_1 n^{1 + \delta_2}, \quad r \leq (\log n)^{\delta_1}, \quad \min\{n_s\} \geq n^{\delta_2}.
\]
Then for any $\tau > 0$, there exists $C > 0$ such that for all $i \in [n]$,
\[
\left| \left[h^2(\bbL)\right]_{ii} - h^2\left( \frac{d_i}{n} \right) \right| \leq C \cdot \max\left\{ n^{-\beta/2}, n^{-\delta_2} \right\}
\]
with probability at least $1 - Cn^{-\tau}$.
\end{lemma}
\begin{proof}
See Appendix~\ref{app:proof_spectral_concentration}.
\end{proof}

Hence, Lemma~\ref{lem:diagon of f squared} reduces the analysis of diagonal entries of $h^2(\bbL)$ to a function of the node degrees. This result, combined with the concentration of $(\mathbf{X}\mathbf{X}^\top)_{ii}$ around $\left[h^2(\bbL)\right]_{ii}$, allows us to establish degree-ordering guarantees under monotone increasing or decreasing filters. In particular, it quantifies the probability with which comparisons between diagonal entries of $\mathbf{X}\mathbf{X}^\top$ correctly reflect comparisons between node degrees under monotone increasing or decreasing filters, as stated in Theorem~\ref{thm:large diagonal is large degree}.

\begin{theorem}[Diagonal reflects degree]
\label{thm:large diagonal is large degree}
Consider the same assumptions of Lemma~\ref{lem:diagon of f squared}. Let $N:=(\max\{n^{-\beta/2},n^{-\delta_2},d^{-1/2}\})^{-1}$. Then for any vertices $v_i, v_j$ and $\varepsilon, \tau > 0$, there exists $C(\varepsilon, \tau)$ such that:

If $h^2$ is increasing:
\[
\mathbb{P}\left( \begin{array}{l}
(\mathbf{X}\mathbf{X}^\top)_{ii} \geq (\mathbf{X}\mathbf{X}^\top)_{jj} + N^{-1+\varepsilon},\\
\text{and } d_i/n \leq d_j/n + C N^{-(1-\varepsilon)/\beta}
\end{array} \right) \leq C(\varepsilon, \tau) n^{-\tau}.
\]

If $h^2$ is decreasing:
\[
\mathbb{P}\left( \begin{array}{l}
(\mathbf{X}\mathbf{X}^\top)_{ii} \leq (\mathbf{X}\mathbf{X}^\top)_{jj} + N^{-1+\varepsilon},\\
\text{and } d_i/n \geq d_j/n + C N^{-(1-\varepsilon)/\beta}
\end{array} \right) \leq C(\varepsilon, \tau) n^{-\tau}.
\]
\end{theorem}
\begin{proof}
Without loss of generality, assume that $h^2$ is increasing (the proof for the case where $h^2$ is decreasing is analogous).

Fix $\varepsilon, \tau > 0$. Consider the event
\[
    \Omega_{ij} := \left\{ (\mathbf{X}\mathbf{X}^\top)_{ii} > (\mathbf{X}\mathbf{X}^\top)_{jj} + N^{-1+\varepsilon} \right\}.
\]
By Theorem~\ref{thm:diag_concentration}, for all $l \in [n]$, w.h.p.,
\[
    (\mathbf{X}\mathbf{X}^\top)_{ll} = (h^2(\bbL))_{ll} + \ccalO(d^{-1/2}).
\]
Hence, by Lemma~\ref{lem:diagon of f squared}, on the event $\Omega_{ij}$ we have
\begin{align*}
    h^2(d_i/n)
    &\ge h^2(d_j/n) + N^{-1+\varepsilon} \\
        &\, + \ccalO\big(\max\{n^{-\beta/2}, n^{-\delta_2}, d^{-1/2}\}\big) \\
    &> h^2(d_j/n) + \tfrac{1}{2} N^{-1+\varepsilon}.
\end{align*}
Since $h^2$ is increasing and $\beta$-Hölder is continuous, it follows that
\[
    \frac{d_i}{n} \ge \frac{d_j}{n} + C N^{-(1-\varepsilon)/\beta}
\]
for some constant $C > 0$.  
This completes the proof.
\end{proof}

At a high-level, this result implies that if $h^2$ is increasing, then a larger diagonal entry of $\mathbf{X}\mathbf{X}^\top$ corresponds to a vertex with larger degree; and vice-versa if $h^2$ is decreasing.
From this we can immediately deduce a practical sampling rule:
if $h^2$ is \emph{monotone increasing}, we prioritize keeping nodes with \emph{largest scores}, where the scores are given by the diagonal entries of $\mathbf{X}\mathbf{X}^\top$; 
if $h^2$ is \emph{monotone decreasing}, we prioritize keeping nodes with \emph{lowest scores}, again defined by the diagonal of $\mathbf{X}\mathbf{X}^\top$. These rules guide the node selection direction used in feature-aware subsampling for trace preservation. 
The following corollary quantifies the benefit of this sampling rule compared to uniform node sampling.

\begin{corollary}[Trace preservation under monotone filters]
\label{cor:trace_preservation}
Under the assumptions of Theorem~\ref{thm:large diagonal is large degree}, let $\widetilde{\bbL}$ be the Laplacian of the subgraph sampled by $\mathrm{diag}(\mathbf{X}\mathbf{X}^\top)$, and $\cL_{\mathrm{u}}$ that from uniform sampling with the same ratio.

If $h^2$ is monotone increasing,
\[
\mathbb{P}\big(\mathrm{tr}(\widetilde{\bbL}) \ge \mathrm{tr}(\cL_{\mathrm{u}})\big) \ge 1 - C(\varepsilon,\tau)n^{-\tau},
\]
when keeping nodes with the largest diagonals of $\mathbf{X}\mathbf{X}^\top$.

If $h^2$ is monotone decreasing,
\[
\mathbb{P}\big(\mathrm{tr}(\widetilde{\bbL}) \ge \mathrm{tr}(\cL_{\mathrm{u}})\big) \ge 1 - C(\varepsilon,\tau)n^{-\tau},
\]
when keeping nodes with the smallest diagonals.
\end{corollary}

Thus, under monotone filters, feature-driven subsampling achieves stronger Laplacian trace preservation than uniform sampling at the same sampling ratio, w.h.p. In the next section, we empirically assess the practical impact of this result.

\begin{remark}[Non-monotone filters]
While the results in Theorem~\ref{thm:large diagonal is large degree} and Corollary~\ref{cor:trace_preservation} rely on the monotonicity of $h^2$, the proposed sampling framework remains robust in more general settings. If $h^2$ is non-monotone but exhibits \emph{local monotonicity} over the support of the graph's empirical spectral distribution, the ordering of $(\mathbf{X}\mathbf{X}^\top)_{ii}$ will still correlate with structural importance within that spectral band. In cases where $h^2$ is highly oscillatory—such as in certain band-pass filters—the diagonal entries of $\mathbf{X}\mathbf{X}^\top$ instead reflect the node's \emph{spectral centrality} relative to the filter's passband. In such scenarios, while the degree-ordering might be lost, the feature-driven sampling still targets nodes that maximize the preserved signal energy -- the ultimate goal of trace-preserving subsampling.
\end{remark}

\section{Experiments Under Coarse Alignment}

We evaluate the effectiveness of our sampling algorithm on the Planetoid citation networks Cora, CiteSeer, and PubMed, whose feature heterophilies are $0.0415$, $0.0194$, and $0.0007$, respectively~\cite{Yangetal2016}. Our objective is to compare the test accuracy achieved by GNNs trained on subgraphs sampled using our heuristic against those obtained via uniform random sampling, when both are evaluated on the full graph.

\subsection*{Trace Preservation}
To assess the structural quality of the sampled subgraphs, we examine the normalized Laplacian trace $\tilde{\mathrm{tr}}(\bbL) = \mathrm{tr}(\bbL) / n$, which serves as a surrogate for the rank of the Laplacian and thus relates directly to GNN expressivity. In accordance with the theoretical lower bound established in Proposition~\ref{prop:bound_trL}, Figure~\ref{fig:trace} shows that, for Cora and PubMed, our heuristic consistently produces subgraphs with larger average normalized trace than those produced by random sampling.

\subsection*{GNN Training and Transferability}

We investigate the transferability of GNNs trained on these sampled subgraphs. 
For each sampling budget (shown along the $x$-axis of Figure~\ref{fig:GNN Acc}), we train a GCN~\cite{kipf17-classifgcnn} on the sampled subgraph and evaluate its test accuracy on the full graph. All models use a two-layer GCN with hidden dimension $128$ and ReLU activations, trained for $300$ epochs. We apply the Adam optimizer~\cite{kingma17-adam} with learning rate $0.001$ and weight decay $0.0001$. All graphs are symmetrized prior to sampling to ensure undirected structure.

Figure~\ref{fig:GNN Acc} aggregates results from 50 independent runs. Error bars denote standard error, while red dots indicate the mean accuracy achieved using our sampling method. The largest performance improvements again occur on PubMed, consistent with its high feature homophily and the theoretical predictions derived in our analysis. Across all three datasets, our heuristic generally outperforms random node sampling, especially in the low- and high-budget regimes where retaining structural information is crucial.

\section{Experiments Under Fine Alignment}
\label{sec:experiments}

We empirically validate the fine alignment theory of Section~\ref{sec:fine_alignment} by studying both synthetic data generated from 
\[
    \mathbf{X} = h(\bbL) \bbX_0,
\]
and real-world graph datasets.  
Our goals are: (i) to verify the trace-preservation predictions of Corollary~\ref{cor:trace_preservation}, 
and (ii) to assess the implications for downstream GNN transferability.

\subsection{Synthetic Experiments on Stationary Graph Signals}
\label{subsec:synthetic-fine}
\begin{figure}[t]
    \centering
    \includegraphics[width=1\linewidth]{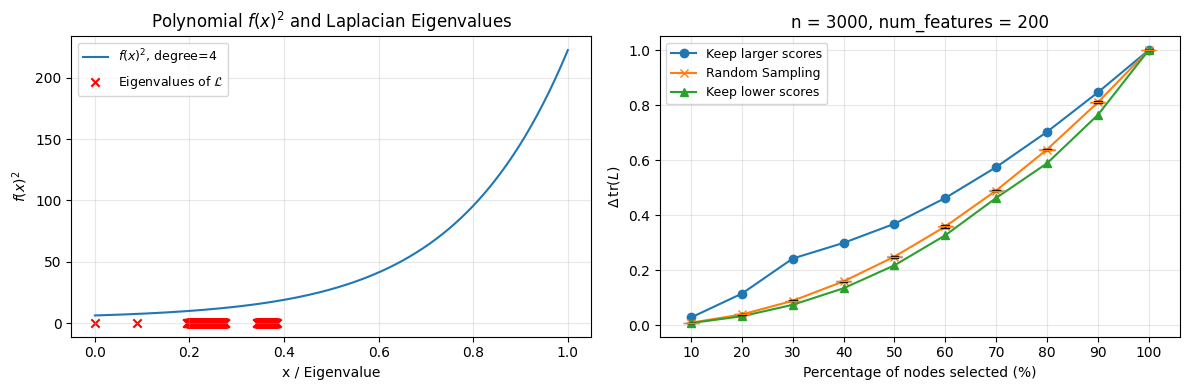}
    \includegraphics[width=1\linewidth]{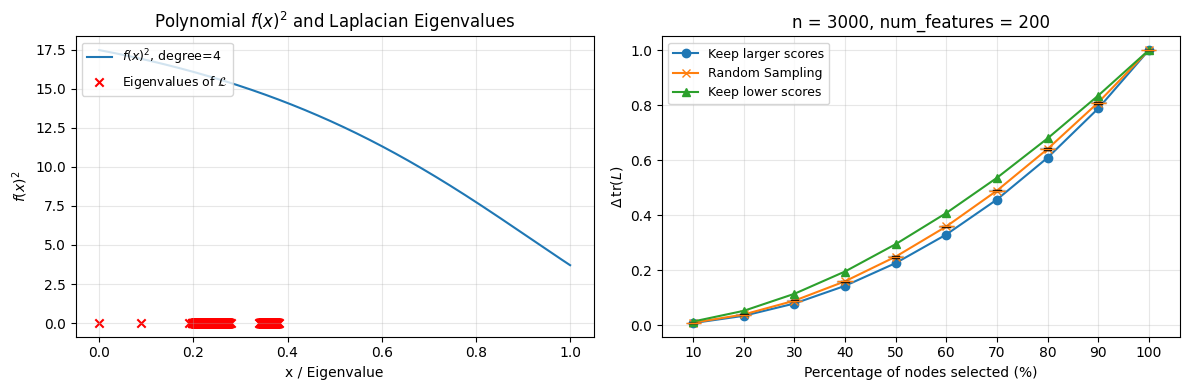}
    \includegraphics[width=1\linewidth]{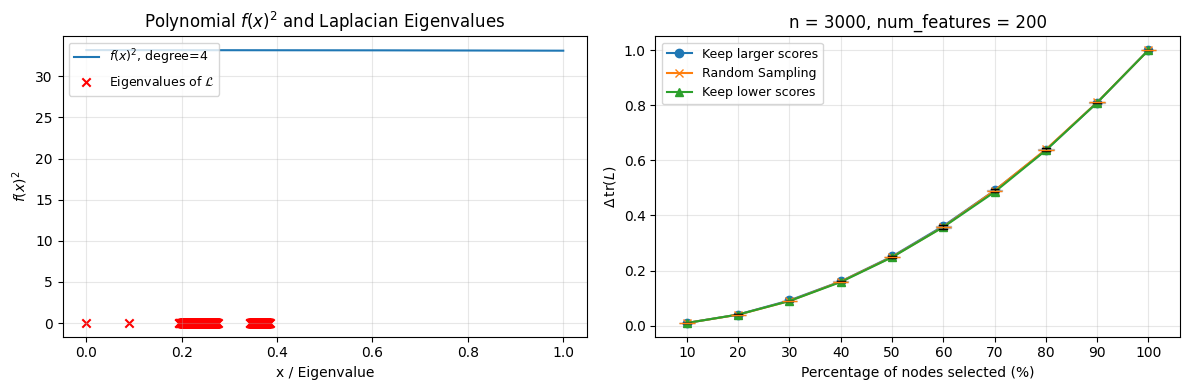}
    \caption{
        Synthetic stationary-graph experiments under monotone and flat $h^2$. The red dots represent the distribution of the eigenvalues of the rescaled Laplacian $\tbL$.
    }
    \label{fig:synthetic-alignment}
\end{figure}

\begin{figure*}[t!]
    \centering
    \includegraphics[width=0.32\linewidth]{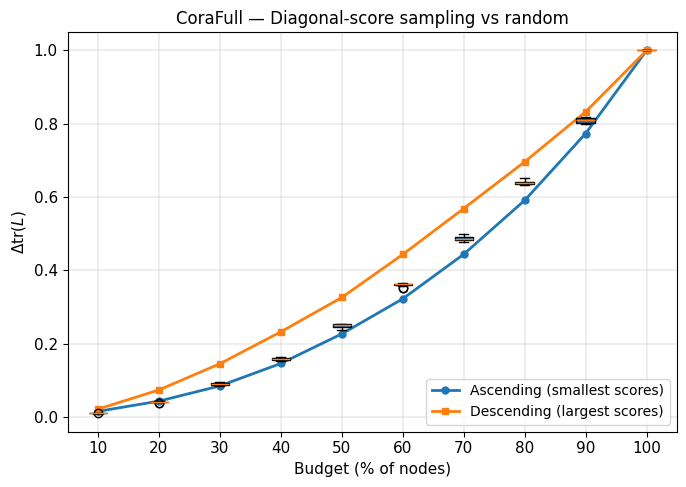}
    \includegraphics[width=0.32\linewidth]{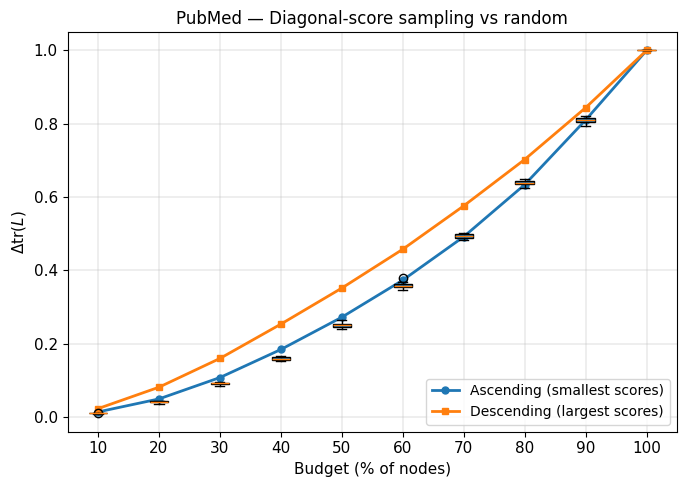}
    \includegraphics[width=0.32\linewidth]{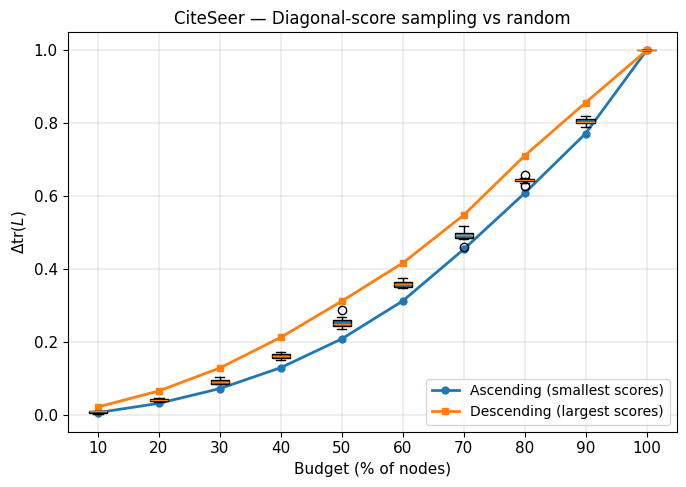}
    \includegraphics[width=0.32\linewidth]{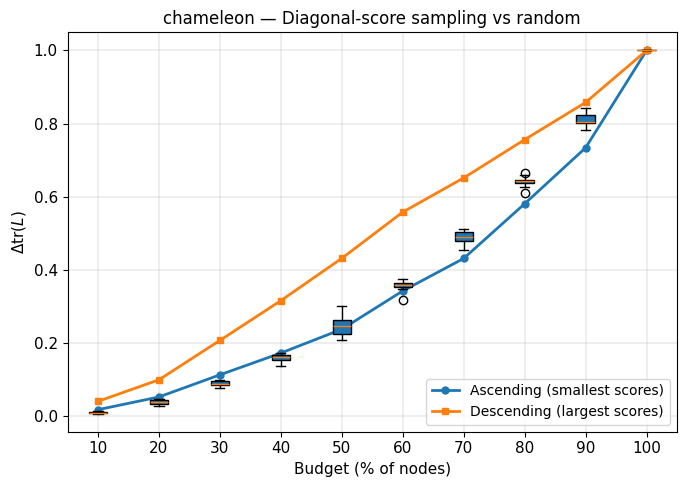}
    \includegraphics[width=0.32\linewidth]{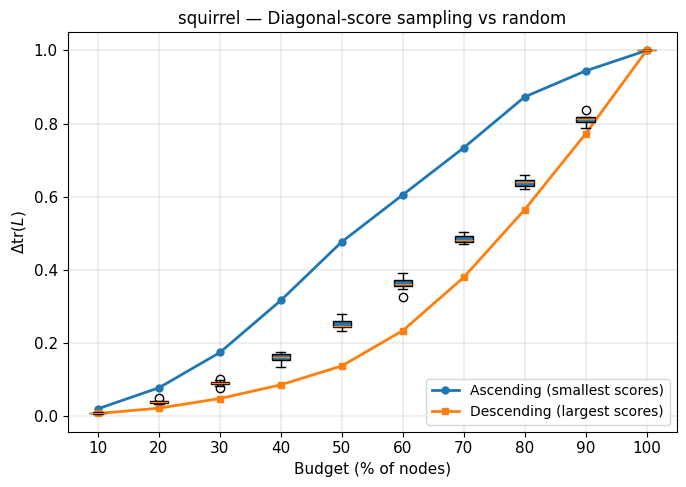}
    \includegraphics[width=0.32\linewidth]{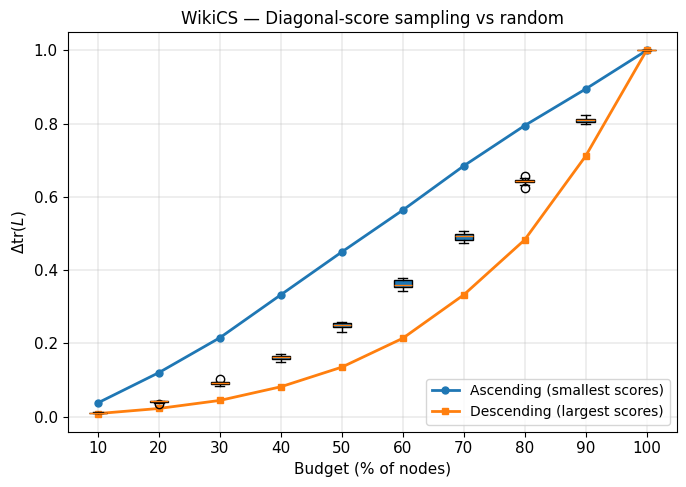}   
 
    \caption{
        Summary of trace-preservation performance $\Delta\mathrm{tr}(\tbL)$ on real-world datasets.
        For each dataset, one of the two rules (keep-largest or keep-smallest)
        uniformly dominates across all sampling ratios $\gamma$.
    }
    \label{fig:synth-monotone}
\end{figure*}

We first examine the stationary model in controlled settings.  
Graphs are generated from a stochastic block \emph{graphon} model, instantiated as finite random graphs with rescaled Laplacian $\tbL$.  
Given $\tbL$, we draw i.i.d.\ Gaussian features $X_0$ and construct $X = h(\tbL) X_0$ using random polynomials $h$ of degree at most five.  
For each node $i$, we compute the diagonal alignment score
\[
    s_i = (\mathbf{X}\mathbf{X}^\top)_{ii},
\]
and compare two node-retention rules: keep-largest and keep-smallest $s_i$, under a fixed sampling budget. 

Performance is measured by the trace preservation ratio
\[
    \Delta \mathrm{tr}(\tbL)
    = \frac{\mathrm{tr}(\tbL_{\mathrm{sub}})}{\mathrm{tr}(\tbL)},
\]
where $\tbL_{\mathrm{sub}}$ is the Laplacian of the sampled subgraph. As shown in Figure~\ref{fig:synthetic-alignment}, the trace preservation behavior depends on the monotonicity of the filter response $h^2$.

The empirical results perfectly match theory:

\begin{itemize}
    \item If $h^2$ is \emph{monotone increasing} on the spectral support of $\tbL$, keeping \emph{larger} $s_i$ yields the largest $\Delta\mathrm{tr}(\tbL)$.
    \item If $h^2$ is \emph{monotone decreasing}, keeping \emph{smaller} $s_i$ maximizes trace preservation compared to keeping largest.
\end{itemize}

This monotonicity rule holds uniformly across graphon realizations, sampling budgets, and random polynomials whose squared response is monotone on the spectrum.
When $h^2$ is non-monotone, no single rule is uniformly optimal, consistent with the absence of a universal monotone ordering of spectral energies.

\subsection{Experiments on Real Graph Data}
\label{subsec:real-fine}

We next evaluate the sampling rules on real-world graphs using the stationary alignment score $s_i = (\mathbf{X}\mathbf{X}^\top)_{ii}$ computed from \emph{centered} node features (as the stationary model implies zero-mean signals).  

\begin{figure*}[t]
    \centering
    \includegraphics[width=0.32\linewidth]{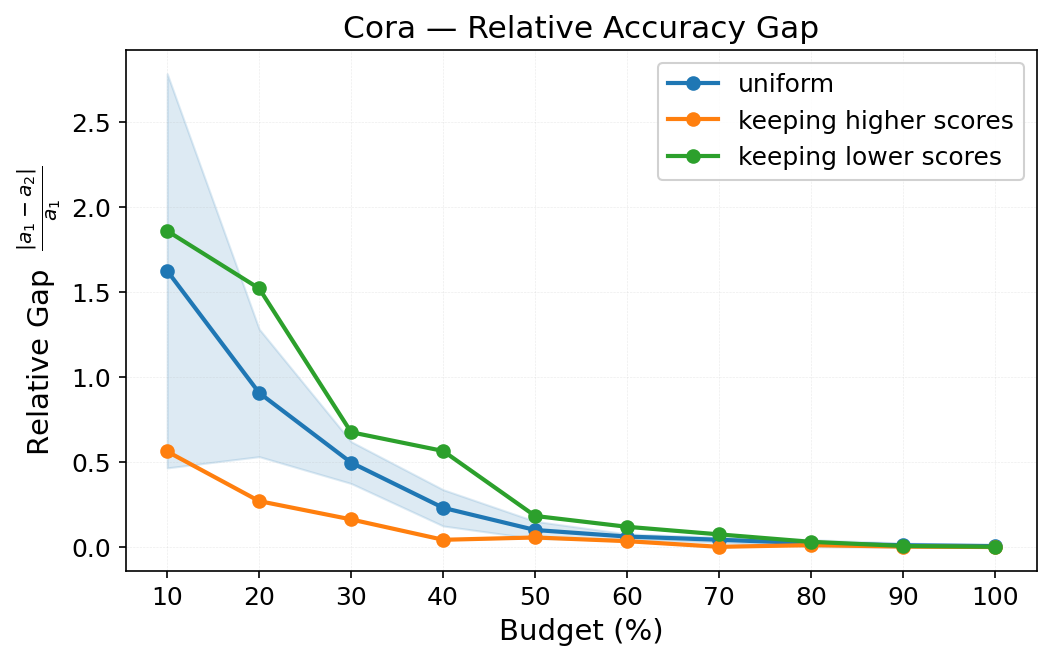}
    \includegraphics[width=0.32\linewidth]{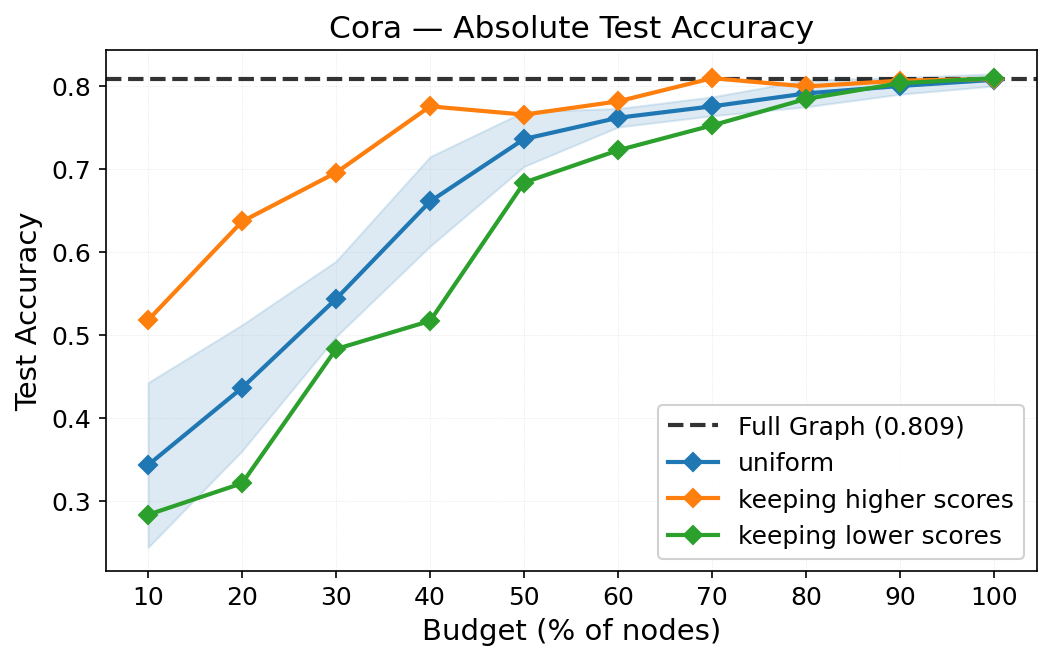}
    \includegraphics[width=0.32\linewidth]{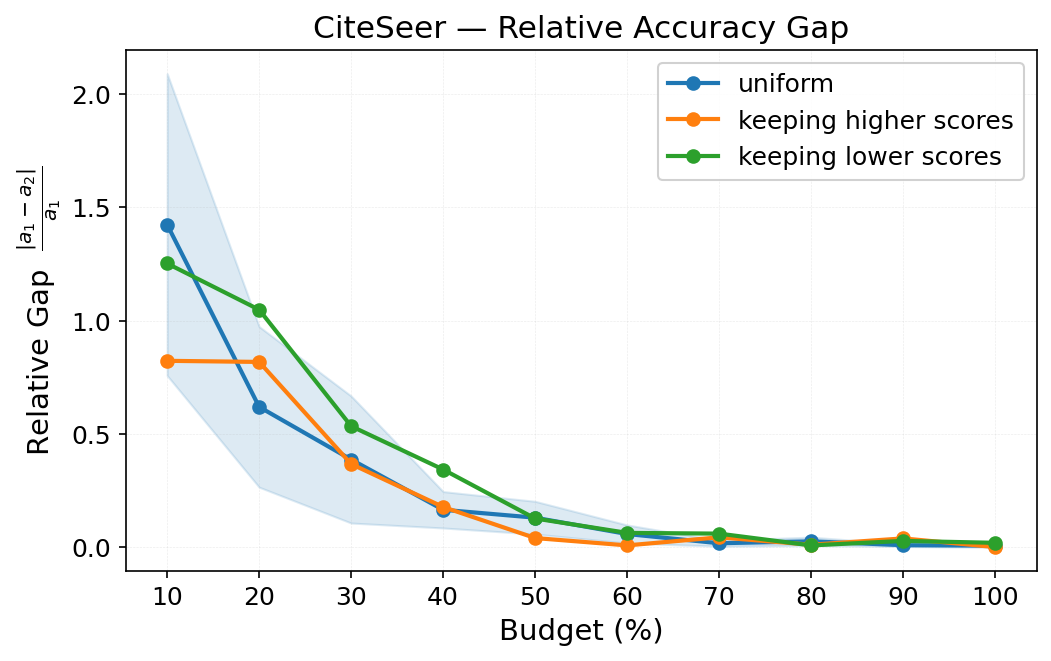}
    \includegraphics[width=0.32\linewidth]{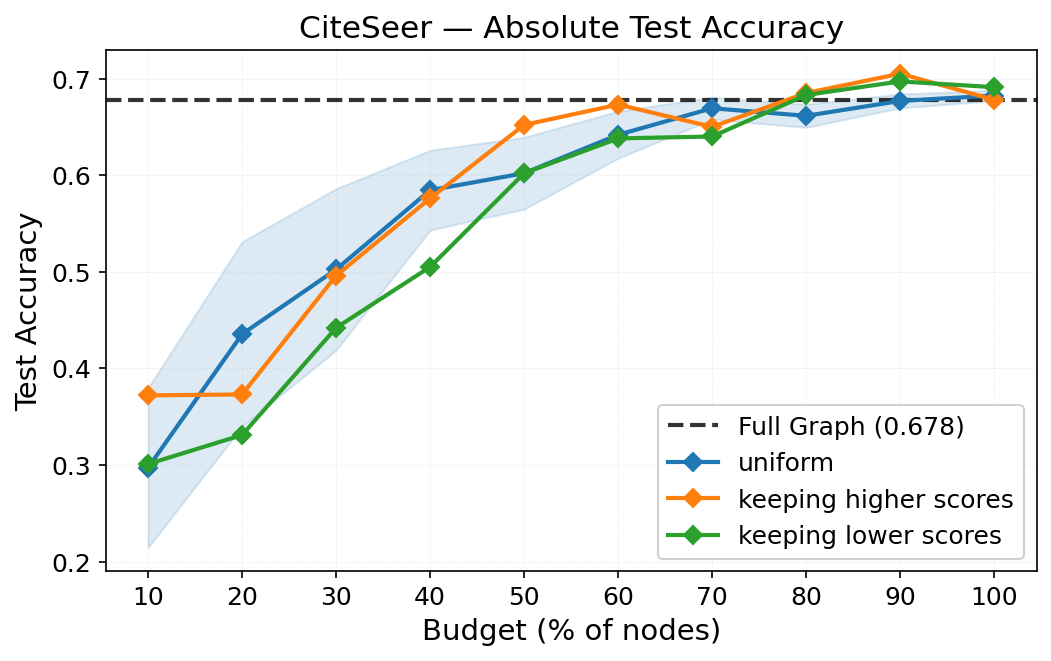}
    \includegraphics[width=0.32\linewidth]{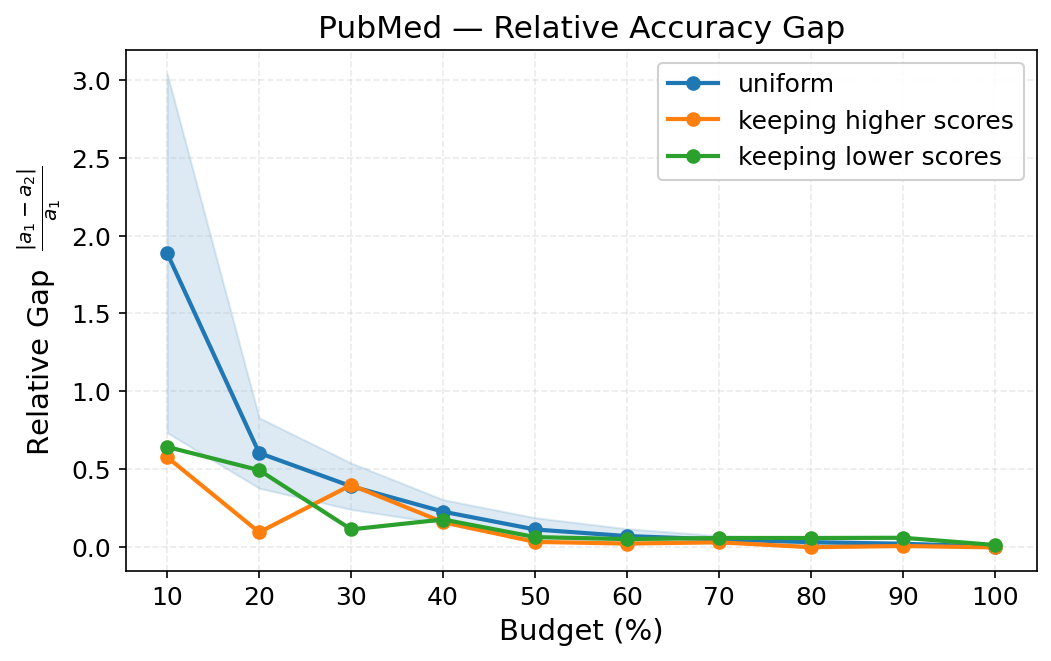}
    \includegraphics[width=0.32\linewidth]{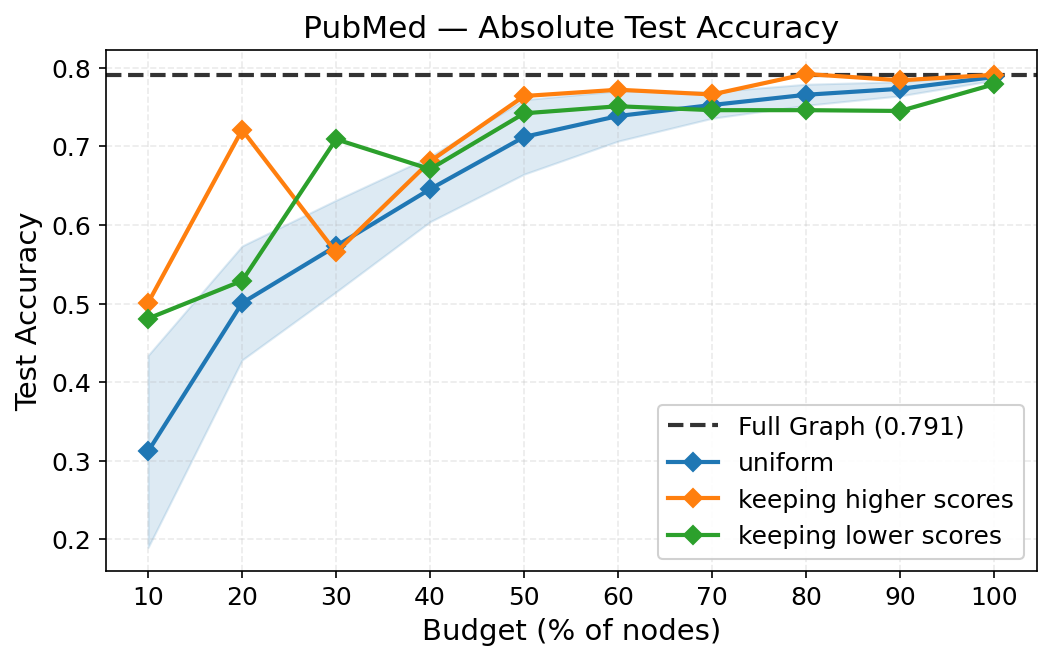}
    
    \caption{
        GNN transferability on sampled subgraphs.
        The retention rule yielding higher $\Delta\mathrm{tr}(\tbL)$ consistently exhibits
        smaller transfer error. Even though the superior performance is not uniformly held, we observe that the algorithm that behaves better is the algorithm that corresponds to the one that is more preserving of the trace of the Laplacian.
    }
    \label{fig:gnn-transfer}
\end{figure*}

\subsection*{Trace Preservation}

For each dataset and sampling ratio~$\gamma$, we compute $\Delta\mathrm{tr}(\tbL)$ under both keep-largest and keep-smallest scores. Across citation networks, social networks, and other benchmarks with varying label and feature homophily, from Figure~\ref{fig:synth-monotone} we observe that on sufficiently large graphs, {one} of the two rules consistently dominates the other, and random subsampling, over {all} budgets.


\subsection*{GNN Training and Transferability}

To study downstream performance, we train a two-layer GCN on sampled subgraphs and evaluate the trained model {directly on the full graph} without fine-tuning, the standard transferability setting.  
Let $a_2$ denote accuracy when training on the full graph, and let $a_1(\gamma)$ denote the test accuracy for the model trained only on the sampled subgraph but tested on the test nodes of the whole graph. 
We measure the relative transfer gaps
\[
    \frac{|a_1(\gamma)-a_2|}{a_1(\gamma)} \text{.}
\]

Across all datasets, the node-retention rule that yields {higher} $\Delta\mathrm{tr}(\tbL)$ also yields {smaller} transfer gaps, with the largest gains observed at small budgets where uniform sampling severely degrades connectivity and trace.  
These findings confirm that the fine alignment model correctly identifies nodes that preserve the effective spectral rank, and that this preservation directly improves GNN scalability and transfer performance.

\section{Discussion}
\label{sec:discussion}

From a signal processing viewpoint, the behavior of the spectral filter \( h \) on the graph Laplacian eigenvalues is central in determining which frequency components are retained or suppressed during learning. In our analysis, we are particularly interested in the implications of the monotonicity of \( h \), on how this relates to the interaction of the feature correlation matrix with the graph, and on how to determine the appropriate node sampling strategy.

\subsection{Monotonicity and Filter Energy}

Let us consider a spectral filter \( h(\lambda) \) applied to Laplacian eigenvalues \( \lambda \). If \( h \) is monotone increasing, then its square \( h^2(\lambda) \), which corresponds to the power spectral response, is also monotone increasing. Similarly, if \( h \) is monotone decreasing, then \( h^2 \) preserves the decreasing trend. This preservation of monotonicity is essential as it determines which spectral components of $\bbX$ in \eqref{def:stationary_model} have their energy preserved.
A monotone increasing \( h^2 \) will favor the amplification of high-frequency components in the graph spectrum, while attenuating the low-frequency ones. Conversely, a decreasing \( h^2 \) suppresses high frequencies and emphasizes low frequencies.

\subsection{Connections to Graph Homophily and Heterophily}

This frequency selection has important implications for the underlying structure of the graph signal. In graphs exhibiting \emph{feature homophily}, neighboring nodes tend to share similar features, resulting in smooth graph signals. These signals have dominant energy in the low-frequency spectrum, i.e., the components associated with small eigenvalues of the Laplacian. Therefore, effective learning in homophilic settings typically relies on \textbf{low-pass filters}, where \( h \) (and hence \( h^2 \)) is monotone decreasing.

In contrast, \emph{heterophilic} graphs contain feature-discontinuous signals where neighboring nodes often have dissimilar features. Such signals are non-smooth and contain significant high-frequency content. As a result, learning tasks on heterophilic graphs benefit from \textbf{high-pass filters}, corresponding to monotone increasing \( h \) and \( h^2 \), which accentuate high-frequency components.

\subsection{Implications for Node Sampling}

The alignment between the monotonicity of \( h^2 \) and the graph's feature correlation provides a principled guide for node selection. When \( h^2 \) is increasing---corresponding to a heterophilic regime---it is beneficial to retain nodes with higher energy scores (e.g., larger row norms of \( \mathbf{X} \)). Conversely, when \( h^2 \) is decreasing---aligned with homophilic graphs---it is optimal to retain nodes with lower energy scores to preserve the dominant low-frequency content.

This relationship underlies the sampling strategies introduced in Section \ref{sec:fine_alignment}, which adapts the node selection policy based on the monotonicity of the target filter response. Such sampling not only aligns with the energy distribution of the signal but also enhances generalization by focusing on structurally informative nodes.

\section{Conclusions}
\label{sec:conclusions}

In this work, we proposed a unified framework for improving the transferability of GNNs through principled subgraph sampling, grounded in both theoretical foundations and empirical validation. Our approach is centered around two alignment models:
coarse alignment, which introduces a novel notion of feature homophily and proposes a trace-preserving sampling algorithm based on the graph Laplacian and the normalized feature correlation matrix; and fine alignment, which refines the notion of alignment through a stationary signal model demonstrating that node-wise feature energy (i.e., the diagonal of the covariance matrix) concentrates around spectral responses of the graph filter. This connection justifies the use of spectral energy-based node scoring and guides the optimal sampling direction based on the monotonicity of the filter response.
We derived rigorous theoretical guarantees under the stationary model and the SBM, showing that our sampling strategy preserves the trace of the Laplacian and enhances the expressivity of GNNs. Extensive experiments on both synthetic and real-world graphs validate our theoretical insights, demonstrating superior performance of the proposed method over uniform sampling in terms of both trace preservation and GNN transferability.
These findings offer a deeper understanding of how graph structure and feature alignment jointly impact the generalization ability of GNNs across scales. They also provide practical tools for scalable GNN training via sampling, paving the way for broader applications in large-scale graph learning.

\bibliographystyle{IEEEtran}
\bibliography{myIEEEabrv,strings,refs,bib_cumulative}
\appendix
\section{Proof of Lemma~\ref{lem:spectral_op_concentration}}
\label{app:proof_spectral_concentration}
To prove Lemma \ref{lem:diagon of f squared}, we first invoke the following lemma. It allows to transfer the analysis of the random graph Laplacian $\bbL$ to that of the expected rescaled Laplacian $\overline{\bbL}$, which makes the analysis easier since we know the structure of $\overline{\bbL}$ well.
\begin{lemma}[Concentration of the Laplacian operator]
\label{lem:op_concentration}
Let $G$ be a random graph on $n$ vertices, with independent edge probabilities $\{\bbP_{ij}\}_{i > j}$. Then, for all $\tau > 0$, there exists a constant $C > 0$ such that
\[
 \| \bbL - \overline{\bbL} \|_{\mathrm{op}} \leq C n^{-1/2}
\]
with probability at least $1 - Cn^{-\tau}$. Moreover, if there exist constants $\delta_1, \delta_2 > 0$ such that
\[
 \sum_{i > j} \bbP_{ij}(1 - \bbP_{ij}) \geq \delta_1 n^{1 + \delta_2},
\]
then we also have
\[
 \| \bbL - \overline{\bbL} \|_{\mathrm{op}} \geq C^{-1} n^{-1 + \delta_2 / 2}
\]
with probability at least $1-Cn^{-\tau}$.
\end{lemma}
\begin{proof}
We write the normalized Laplacian as $\mathbf{L} = n^{-1} (\mathbf{D} - \mathbf{A})$, with $\mathbf{A}_{ij} \sim \mathrm{Bernoulli}(\bbP_{ij})$ for $i > j$ and $\mathbf{A}_{ii} = 0$. The upper bound follows by applying matrix Bernstein to $\mathbf{A}$ and Chernoff bounds to $\mathbf{D}$.

For the lower bound, we first observe:
\[
\| \bbL - \overline{\bbL} \|_{\mathrm{op}}
\geq n^{-1/2} \| \bbL - \overline{\bbL} \|_F
\geq n^{-3/2} \sqrt{ 2 \sum_{i>j} (\bbA_{ij} - \overline{\bbA}_{ij})^2 }.
\]
Define $\bbX_{ij} := (\bbA_{ij} - \overline{\bbA}_{ij})^2$, for which $|\bbX_{ij}| \leq 4$, $\mathbb{E}[\bbX_{ij}] = \bbP_{ij}(1 - \bbP_{ij})$, and $\operatorname{Var}(\bbX_{ij}) \leq 1$. Bernstein's inequality yields
\[
\sum_{i>j} \bbX_{ij}
\geq
\sum_{i>j} \bbP_{ij}(1 - \bbP_{ij}) - \ccalO(n),
\]
which exceeds $\delta_1 n^{1 + \delta_2}/2$ w.h.p. This completes the proof.
\end{proof}

Lemma \ref{lem:op_concentration} then allows proving the following spectral concentration result for $h^2(\bbL)$ under H\"older continuity.

\begin{lemma}[Spectral concentration under Hölder smoothness]
\label{lem:spectral_op_concentration}

Let $G$ be a random graph on $n$ nodes as above. For any $\tau > 0$, there exists a constant $C > 0$ such that the following holds. Let $g: [0,2] \to \mathbb{R}$ be $\beta$-Hölder continuous with $0 < \beta < 1$. Then
\[
 \| g(\mathbf{L}) - g(\overline{\mathbf{L}}) \|_{\mathrm{op}} \leq C_g n^{-\beta/2}
\]
with probability at least $1 - C n^{-\tau}$. If additionally,
\[
 \sum_{i > j} \bbP_{ij}(1 - \bbP_{ij}) \geq \delta_1 n^{1 + \delta_2},
\]
for some $\delta_1, \delta_2 > 0$, and $g$ is Lipschitz, then
\[
 \| g(\mathbf{L}) - g(\overline{\mathbf{L}}) \|_{\mathrm{op}}
 \leq C_g \log(n)\, n^{-1/2}
\]
with the same probability.
\end{lemma}

\begin{proof}
The proof follows directly from Lemma \ref{lem:op_concentration} and standard operator Hölder–Zygmund results, specifically (1.1), (1.2), and Corollary 7.5 of \cite{aleksandrov2010operator_holder_zygmund}
\end{proof}

Finally, we also need the following lemma, which describes the spectral structure of the expected Laplacian $\Bar{\bbL}$.

\begin{lemma}[Spectral Structure of Expected Laplacian]
\label{lem:eigenvalues/vectos of expected lapplacian}

Suppose $\bbL$ is sampled from an SBM $\mathcal{G}(\boldsymbol{n}, \bbP)$. Then the expected Laplacian is the following $n$ by $n$ matrix:
\[
n \cdot \overline{\bbL} =
\begin{bmatrix}
    -\bbP_{11} \mathbf{1}_{n_1} \mathbf{1}_{n_1}^\top + w_1 \bbI_{n_1} & \cdots & -\bbP_{1r} \mathbf{1}_{n_1} \mathbf{1}_{n_r}^\top \\
    \vdots & \ddots & \vdots \\
    -\bbP_{r1} \mathbf{1}_{n_r} \mathbf{1}_{n_1}^\top & \cdots & -\bbP_{rr} \mathbf{1}_{n_r} \mathbf{1}_{n_r}^\top + w_r \bbI_{n_r}
\end{bmatrix},
\]
where $w_s = \sum_{t=1}^r n_t \bbP_{st}$ and $\mathbf{1}_{n_s}$ is the all-ones vector in $\mathbb{R}^{n_s}$. Moreover, define the matrix
\[
    \bbM = \mathrm{diag}(w_1, \ldots, w_r) - \bbP \cdot \mathrm{diag}(\boldsymbol{n}),
\]
which is diagonalizable since $[\mathrm{diag}(\boldsymbol{n})]^{1/2}\bbM[\mathrm{diag}(\boldsymbol{n})]^{-1/2}$ is real and symmetric. Let $\mu_1 = 0, \mu_2, \ldots, \mu_r$ be the eigenvalues of $\bbM$ with eigenvectors $\mathbf{u}_1 = \mathbf{1}_r, \ldots, \mathbf{u}_r \in \mathbb{R}^r$. Let $\{ \mathbf{v}_t^{(s)} \}_{t=2}^{n_s}$ be orthonormal vectors completing $\mathbf{v}_1^{(s)} = n_s^{-1/2} \mathbf{1}_{n_s}$ to a basis of $\mathbb{R}^{n_s}$.

Then, the eigenvalues and eigenvectors of $n \cdot \overline{\bbL}$ are:
\begin{itemize}
    \item $\mu_1, \ldots, \mu_r$ corresponding to eigenvectors
    \[
    \left\{
    \frac{1}{\sqrt{ \sum_{j=1}^r u_s(j)^2 n_j }}
    \begin{bmatrix}
        u_s(1) \mathbf{1}_{n_1} \\
        \vdots \\
        u_s(r) \mathbf{1}_{n_r}
    \end{bmatrix}
    \right\}_{s=1}^r.
    \]
    \item $w_s$ with multiplicity $n_s - 1$ for $s = 1, \ldots, r$, corresponding to block-orthogonal vectors of the form:
    \[
        \left\{
        \begin{bmatrix}
            \v^{(1)}_t\\
            0\\
            \vdots\\
            0
        \end{bmatrix}
        \right\}_{t=2}^{n_1},
        \ldots,
        \left\{
        \begin{bmatrix}
            0\\
            \vdots\\
            0\\
            \v_t^{(r)}
        \end{bmatrix}
        \right\}_{t=2}^{n_r}.
    \]
\end{itemize}
\end{lemma}

\begin{proof}
The proof is simple and we omit it for brevity.
\end{proof}

We are now ready to prove Theorem Lemma ~\ref{lem:diagon of f squared}.

\begin{proof}[Proof of Lemma \ref{lem:diagon of f squared}]
By Lemma \ref{lem:spectral_op_concentration}, we have w.h.p.:
 \[
|[h^2(\bbL)]_{ii}-[h^2[\Bar{\bbL}]]_{ii}|\leq\|h^2(\bbL)-h^2(\Bar{\bbL})\|_\op\leq C\log(n)\times n^{-\beta/2}
 \]
Let $g:[n]\rightarrow[r]$ be the membership assignment function. By Lemma \ref{lem:eigenvalues/vectos of expected lapplacian}, we can write, for any $i\in[n]$,
\begin{align*}
(h^2(\Bar{\bbL}))_{ii}&=\sum_{s=1}^r\frac{h^2(\mu_s/n)\mathbf{u}_s(g(i))^2}{\mathbf{u}_s(1)^2n_1+\cdots+\mathbf{u}_s(r)^2n_r}\\
&+h^2(w_{g(i)}/n)\sum_{t=2}^{n_{g(i)}}\mathbf{v}_t^{(g(i))}(i)^2\\
 &=\sum_{s=1}^r\frac{h^2(\mu_s/n)\mathbf{u}_s(g(i))^2}{\mathbf{u}_s(1)^2n_1+\cdots+\mathbf{u}_s(r)^2n_r}\\
 &+h^2(w_{g(i)}/n)\left(1-\frac{1}{n_{g(i)}}\right)
\end{align*}
 where in the last line we used that $(n_s)^{-1/2}\mathbf{1}_{n_s},\mathbf{v}_2,...,\mathbf{v}_{n_s}$ forms an orthonormal basis of $\R^{n_s}$ for any $1\leq s\leq r$. From the closed form expression of $\bbM$ and a Frobenius norm bound one can easily deduce $n^{-1}\|\bbM\|_\op\leq Cr$ for some constant $C>0$, so we have $|h^2(\mu_s/n)|\leq C'r^\beta$. Applying the lower bound $\min\{n_s\}\geq n^{\delta_2}$ gives $\mathbf{u}_s(1)^2n_1+\cdots+\mathbf{u}_s(r)^2n_r\geq n^{\delta_2}$. Hence,
 \begin{align*}
&\left|\sum_{s=1}^r\frac{h^2(\mu_s/n)\mathbf{u}_s(g(i))^2}{\mathbf{u}_s(1)^2n_1+\cdots+\mathbf{u}_s(r)^2n_r}\right|\\
 &\leq \frac{r^{1+\beta}}{n^{\delta_2}}\leq\frac{(\log(n))^{\delta_1(1+\beta)}}{n^{\delta_2}}=\ccalO(n^{-\delta_2}).
\end{align*}
 Similarly, $w_s/n\leq r$ gives $h^2(w_s/n)\leq Cr^\beta$ so we have
 \[
 h^2(w_{g(i)}/n)\left(1-\frac{1}{n_{g(i)}}\right)=h^2(w_{g(i)}/n)+\ccalO(n^{-\delta_2}).
 \]
 On the other hand for any $i\in[n]$, we have $w_{g(i)}/n-\bbP_{g(i)g(i)}/n=(\Bar{\bbL})_{ii}$. And by definition $d_i/n=(\bbL)_{ii}$, so applying Lemma \ref{lem:op_concentration} again gives
 \begin{align*}
 (h^2(\bbL))_{ii}&=h^2(w_{g(i)}/n)+\ccalO(n^{-\beta/2})+\ccalO(n^{-\delta_2})\\
 &=h^2(d_i/n)+\ccalO(\max\{n^{-\beta/2},n^{-\delta_2}\})
  \end{align*}
 This completes the proof.
\end{proof}

\end{document}